\newtheorem{theorem}{Theorem}[section]
\newtheorem{lemma}[theorem]{Lemma}
\newtheorem{proposition}[theorem]{Proposition}
\newtheorem{corollary}[theorem]{Corollary}
\newenvironment{remark}{\smallskip\noindent {\em Remark:}}{\smallskip}
\DeclareSymbolFont{AMSb}{U}{msb}{m}{n}
\DeclareMathSymbol{\N}{\mathbin}{AMSb}{"4E}
\DeclareMathSymbol{\Z}{\mathbin}{AMSb}{"5A}
\DeclareMathSymbol{\R}{\mathbin}{AMSb}{"52}
\DeclareMathSymbol{\Q}{\mathbin}{AMSb}{"51}
\DeclareMathSymbol{\erert}{\mathbin}{AMSb}{"50}
\DeclareMathSymbol{\I}{\mathbin}{AMSb}{"49}
\DeclareMathSymbol{\C}{\mathbin}{AMSb}{"43}
\newcommand{\cO}{\mathcal{O}}
\newcommand{\CCC}{\mathcal{C}}
\newcommand{\BBB}{\mathcal{B}}
\newcommand{\bandw}{bandwidth\xspace}
\newcommand{\BANDW}{Bandwidth\xspace}
\newcommand{\smallcomp}[1]{\mathsf{sc}(#1)}
\newcommand{\smallcompvert}[1]{V(\smallcomp{#1})}
\newcommand{\largecomp}[1]{\mathsf{lc}(#1)}
\newcommand{\largecompvert}[1]{V(\largecomp{#1})}
\definecolor{officegreen}{rgb}{0.0, 0.5, 0.0}
\title{An Exponential Time 2-Approximation Algorithm for Bandwidth \thanks{A preliminary version of this paper appeared in the proceedings of IWPEC 2009 \cite{FurerGK09}.}}
\author{
 Martin F{\"u}rer\thanks{%
  Computer Science and Engineering, Pennsylvania State University, University Park, Pennsylvania, USA. %visiting EPFL Lausanne and Universit\"at Z\"urich,
  Email: \texttt{furer@cse.psu.edu}}
 \and
 Serge Gaspers\thanks{%
  Institute of Information Systems, Vienna University of Technology, Vienna, Austria.
  Email: \texttt{gaspers@kr.tuwien.ac.at}}
 \and
 Shiva Prasad Kasiviswanathan\thanks{%
  IBM T.\ J.\ Watson Research Center, Yorktown Heights, New York, USA.
  Email: \texttt{kasivisw@gmail.com}}
}
\begin{document}
\date{}
\maketitle

\begin{abstract}
The \bandw of a graph $G$ on $n$ vertices is the minimum $b$ such that the vertices of $G$ can be labeled from $1$ to $n$ such that the labels of every pair of adjacent vertices differ by at most $b$.

In this paper, we present a $2$-approximation algorithm for the \BANDW problem that takes worst-case $\cO(1.9797^n)$ $= \cO(3^{0.6217 n})$ time and uses polynomial space. This improves both the previous best $2$- and $3$-approximation algorithms of Cygan {\em et al.} which have $\cO^*(3^n)$ and $\cO^*(2^n)$ worst-case running time bounds, respectively.
% %for the same approximation ratio and the previous best  of Cygan and Pilipczuk~\cite{CyganP09icalp}, which runs in time 
Our algorithm is based on constructing bucket decompositions of the input graph. A bucket decomposition partitions the vertex set of a graph into ordered sets (called {\em buckets}) of (almost) equal sizes such that all edges are either incident to vertices in the same bucket or to vertices in two consecutive buckets. The idea is to find the smallest bucket size for which there exists a bucket decomposition. The algorithm uses a divide-and-conquer strategy along with dynamic programming to achieve the improved time bound.

\medskip\noindent
\emph{Keywords}: exponential time algorithm; approximation algorithm; graph bandwidth; bucket decomposition
\end{abstract}

\section{Introduction}
\label{sec:intro}
The \emph{\bandw{}} of a graph $G$ on $n$ vertices is the minimum integer $b$ such that the vertices of $G$ can be labeled from $1$ to $n$ such that the labels of every pair of adjacent vertices differ by at most $b$.
The \BANDW problem has as input a graph $G$ and an integer $b$ and the question is whether $G$ has \bandw at most $b$. The problem is a special case of Subgraph Isomorphism, as it can be formulated as follows: Is $G$ isomorphic to a subgraph of $P_n^b$? Here, $P_n^b$ denotes the graph obtained from $P_n$ (the path on $n$ vertices) by adding an edge between every pair of vertices at distance at most $b$ in $P_n$.

A typical scenario in which the \BANDW problem arises is that of minimizing the bandwidth of a symmetric matrix $M$ to allow for more efficient storing and manipulating procedures~\cite{Feige00survey}. The \bandw of $M$ is $b$ if all its non-zero entries are at distance at most $b$ from the diagonal. Applying permutations on the rows and columns to reduce the bandwidth of $M$ corresponds then to reordering the vertices of a graph whose adjacency matrix corresponds to $M$ by replacing all non-zero entries by $1$.

The \BANDW problem is NP-hard~\cite{Papadimitriou76}, even for trees of maximum degree at most three~\cite{GareyGJK78}, caterpillars with hair length at most three~\cite{Monien86}, and convex trees \cite{ShresthaTU11}. Even worse, approximating the \bandw within a constant factor is NP-hard, even for caterpillars of degree three~\cite{Unger98}. Further, it is known that the standard parameterization of the problem is hard for every fixed level of the W-hierarchy~\cite{BodlaenderFH94} and unlikely to be solvable in $f(b) n^{o(b)}$ time \cite{ChenHKX04}.

Faced with this immense intractability, several approaches have been proposed in the literature for the \BANDW problem.

\emph{Polynomial time approximation algorithms}.
The first (polynomial time) approximation algorithm with a polylogarithmic approximation factor was provided by Feige~\cite{Feige00approx}. Later, Dunagan and Vempala gave an $\cO(\log^3 n \sqrt{\log \log n})$-approximation algorithm. The current best approximation algorithm achieves an $\cO(\log^3 n(\log \log n)^{1/4})$-approxima\-tion factor~\cite{jrlee}.
%Building on an $\cO(\log^3 n \sqrt{\log \log n})$-approximation algorithm by Dunagan and Vempala \cite{DunaganV01}, the best known polynomial time approximation algorithm, by Krauthgamer et al.~\cite{KrauthgamerLMNR04}, has ratio $\cO(\log^3 n)$. 
For large $b$, the best approximation algorithm is the probabilistic algorithm of Blum {\em et al.}~\cite{BlumKRV00} which has an $\cO(\sqrt{n/b} \log n)$-approximation factor. %is of interest.

\emph{Super-polynomial time approximation algorithms}.
Super-polynomial time approximation algorithms for the \BANDW problem have also been widely investigated~\cite{CyganKPW08,CyganP09icalp,DunaganV01,FeigeT05}. %Before an $\cO(\log^3 n)$-approximation algorithm was known for \BANDW, Dunagan and Vempala~\cite{DunaganV01} proposed an $\cO(\log^3 n)$-approximation algorithm with running time $\cO(n^{\log n})$.
Feige and Talwar~\cite{FeigeT05}, and Cygan and Pilipczuk~\cite{CyganP09icalp} provided subexponential time approximation schemes for approximating the \bandw of graphs with small treewidth. For general graphs, a $2$-approximation algorithm with running time $\cO^*(3^n)$\footnote{$\cO^*(f(n))$ denotes $n^{\cO(1)} \cdot f(n)$.} is easily obtained by combining ideas from \cite{Feige00survey} and \cite{FeigeT05} (as noted in~\cite{CyganKPW08}). Further, Cygan {\em et al.}~\cite{CyganKPW08} provide a $3$-approximation algorithm with running time $\cO^*(2^n)$, which they generalize to a $(4r-1)$-approximation algorithm (for any positive integer $r$) with running time $\cO^*(2^{n/r})$.

\emph{Exact exponential time algorithms}.
Concerning exact exponential time algorithms, the first non-trivial algorithm was the elegant polynomial space $\cO^*(10^n)$ time algorithm of Feige~\cite{Feige00survey}. This bound has been improved in a sequence of algorithms by Cygan and Pilipczuk; their $\cO(9.363^n)$ time algorithm uses polynomial space~\cite{CyganP10}, their $\cO^*(5^n)$ time algorithm uses $\cO^*(2^n)$ space~\cite{CyganP09}, their $\cO(4.83^n)$ time algorithm uses $\cO^*(4^n)$ space~\cite{CyganP09}, and their $\cO(4.473^n)$ time algorithm uses $\cO(4.473^n)$ space ~\cite{CyganP09icalp}. The \BANDW problem can also be solved exactly in $\cO(n^b)$ time using dynamic programming~\cite{GurariS84,MonienS80}.
%and Monien and Sudborough  \cite{MonienS80} are of interest.
% algo of Saxe \cite{Saxe80} runs in time f(b) n^{b+1}

\emph{Graph classes}. Polynomial time algorithms for \BANDW are only known for a small number of restricted graph classes. These are caterpillars of hair length at most 2 \cite{AssmannPSZ81}, chain graphs \cite{KloksKM98}, cographs \cite{Yan97}, interval graphs \cite{KleitmanV90,MaheshRS91,Sprague94}, and bipartite permutation graphs \cite{HeggernesKM09}.
Polynomial time, constant factor approximation algorithms are known for AT-free graphs \cite{KloksKM99}, convex bipartite graphs \cite{ShresthaTU11}, and 2-directional orthogonal ray graphs \cite{ShresthaTU11}.

\emph{Hybrid algorithms}.
Another recent approach to cope with the intractability of \BANDW is through the concept of hybrid algorithms, introduced by Vassilevska {\em et al.}~\cite{VassilevskaWW06}. They gave an algorithm that after a polynomial time test, either computes the minimum \bandw of a graph in $\cO^*(4^{n+o(n)})$ time, or computes an $\cO(\gamma(n) \log^2 n \log \log n)$-approximation in polynomial time, for any unbounded constructible function $\gamma(n)$. This result was improved by Amini {\em et al.}~\cite{AminiFS09} who give an algorithm which, after a polynomial time test, either computes the minimum \bandw of a graph in $\cO^*(4^n)$ time, or provides an $\cO(\log^{3/2} n)$-approximation in polynomial time.

\medskip
The concept of designing approximation algorithms with better performance ratios at the expense of super-polynomial running times is quite natural and has been used in the study of several other problems (see for example \cite{BjorklundH06,BourgeoisEP08,BourgeoisEP08coloring,BourgeoisEP09,BourgeoisLMP10,DantsinGHK01,DyerFKKPV93,Hirsch03,JerrumV96,RamanSS07}). A similar concept is parameterized approximation, which was introduced in three independent papers~\cite{CaiH06,ChenGG06,DowneyFM06}; see the survey by Marx \cite{Marx08}.

\paragraph{Our Results.} Our main result is a $2$-approximation algorithm for the \BANDW problem that takes worst-case $\cO(1.9797^n)$ time (Theorem~\ref{thm:2approx}). This improves the $\cO^*(3^n)$ time bound achieved by Cygan {\em et al.}~\cite{CyganKPW08} for the same approximation ratio. Also, the previous best $3$-approximation algorithm of Cygan and Pilipczuk~\cite{CyganP09icalp} has an $\cO^*(2^n)$ time bound. Therefore, our $2$-approximation algorithm is also faster than the previous best $3$-approximation algorithm. %We also describe a very simple algorithm that achieves an $r$-approximation in worst-case $\cO^*()$ time, where $e$ is the natural logarithm base (Theorem~\ref{thm:lognapprox}). Note that this result is interesting only for certain values of $r$ because of the Cygan and Pilipczuk~\cite{CyganP09icalp} result mentioned earlier (which gives a $(4r-1)$-approximation in $\cO^*(2^{n/r})$ time). All our algorithms take only polynomial space.

Our algorithm is based on constructing bucket decompositions of the input  graph. A bucket decomposition partitions the vertex set of a graph into ordered sets (called {\em buckets}) of (almost) equal sizes such that all edges are either incident to vertices in the same bucket or to vertices in two consecutive buckets. The idea is to find the smallest bucket size for which there exists a bucket decomposition.  This gives a $2$-approximation for the \BANDW problem (Lemmas~\ref{lem:ub} and~\ref{lem:lb}).  The algorithm uses a simple divide-and-conquer strategy along with dynamic programming to achieve this improved time bound.

%Our algorithm tries to group vertices into ordered sets (called {\em buckets}) of (almost) equal sizes such that all edges are either between vertices in the same bucket or adjacent buckets. The idea is to find the smallest bucket size for which there exists such a bucket arrangement.  The algorithm uses a simple divide-and-conquer strategy to check whether the above bucket arrangement exists for a given bucket size.  

%We also show a very simple algorithm that achieves a $\log n$-approximation in sub-exponential ($2^{n\log \log n/\log n}$) time  Compared to result of Cygan and Pilipczuk~\cite{CyganP09icalp}

%

\section{Preliminaries}
\label{sec:prelim}
Let $G=(V,E)$ be a graph on $n$ vertices. A \emph{linear arrangement} of $G$ is a bijective function $L: V \rightarrow [n] = \{1,\hdots,n\}$, that is a numbering of its vertices from $1$ to $n$. The \emph{stretch} of an edge $u v$ is the absolute difference between the numbers assigned to its endpoints $|L(u)-L(v)|$. The \emph{\bandw\/} of a linear arrangement of $G$ is the maximum stretch over all the edges of $G$ and the \emph{\bandw\/} of $G$ is the minimum \bandw over all linear arrangements of $G$.

A \emph{bucket arrangement} of $G$ is a placement of its vertices into buckets such that for each edge, its endpoints are either in the same bucket or in two consecutive buckets~\cite{FeigeT05}. The buckets are linearly ordered and numbered from left to right. 
A \emph{capacity vector} $\CCC$ is a vector of positive integers. The \emph{length} of a capacity vector $\CCC = (\CCC[1], \hdots, \CCC[k])$ is $k$ and its \emph{size} is $\sum_{i=1}^k \CCC[i]$.
Given a capacity vector $\CCC$ of size $n$, a \emph{$\CCC$-bucket arrangement} of $G$ is a bucket arrangement in which exactly $\mathcal{C}[i]$ vertices are placed in bucket~$i$, for each $i$. For integers $n$ and $\ell$ with $\ell < n/2$, an \emph{$(n,\ell)$-capacity vector} is a capacity vector
$$( a, \underbrace{\ell, \ell, \hdots, \ell,}_{ \left\lceil \frac{n}{\ell} \right\rceil -2 \textrm{ times}} b )$$
of size $n$ such that $a,b \le \ell$. % and $a+b = n - \left( \left\lceil \frac{n}{\ell} \right\rceil - 2 \right) \cdot \ell$. % and $|a-b|=1$.
We say that an $(n,\ell)$-capacity vector is {\em left-packed} if $a = \ell$ and {\em balanced} if $|a-b|\le 1$. 
%Moreover, if $a=\ell$ (resp. $b=\ell$) we say that this $(n,\ell)$-capacity vector is \emph{left-packed}
%(resp. \emph{right-packed}) and if $|a-b| \le 1$ we say that it is \emph{balanced}.
%For an integer $n$, 
%If $\mathcal{C}$ is an $(n,\ell)$-capacity vector and $\mathcal{C'}$ is an $(n',\ell)$-capacity vector of the same length $k$ such that $\mathcal{C}[i]\ge \mathcal{C'}[i]$ for all $i \in [k]$, then the \emph{difference} $\mathcal{C}-\mathcal{C'}$ is an $(n-n')$-capacity vector $(\mathcal{C}[1]-\mathcal{C'}[1],\hdots,\mathcal{C}[k]-\mathcal{C'}[k])$.

Let $X \subseteq V$ be a subset of the vertices of $G$. We denote by $G[X]$ the subgraph of $G$ induced on $X$, and by $G \setminus X$ the subgraph of $G$ induced by $V\setminus X$. The \emph{open neighborhood} of a vertex $v$ is denoted by $N_G(v)$ and the \emph{open neighborhood} of $X$ is $N_G(X):=(\bigcup_{v \in X} N_G(v)) \setminus X$. %For a graph $G$ with $q$ connected components $G[V_1],\dots,G[V_q]$, let $\comp{G}=\{V_1, \hdots, V_q\}$ denote the set containing the vertex subsets of all connected components of $G$. 

\section{Exponential Time Algorithms for Approximating Bandwidth}
We first establish two simple lemmas that show that constructing a bucket arrangement can approximate the \bandw of a graph. 
\begin{lemma}\label{lem:ub}
Let $G$ be a graph on $n$ vertices, and let $\CCC$ be an $(n,\ell)$-capacity vector. If there exists a $\mathcal{C}$-bucket arrangement for $G$ then the \bandw of $G$ is at most $2\ell - 1$.
\end{lemma}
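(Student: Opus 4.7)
The plan is to build an explicit linear arrangement $L$ from the given $\CCC$-bucket arrangement and bound the stretch of every edge by $2\ell-1$.

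First I would number the vertices consecutively bucket-by-bucket, from left to right: assign labels $1,\dots,\CCC[1]$ (in an arbitrary order) to the vertices of the first bucket, then $\CCC[1]+1,\dots,\CCC[1]+\CCC[2]$ to the second bucket, and so on. This produces a valid bijection $L:V\to[n]$, so it is a legitimate linear arrangement of $G$; it remains to bound the stretch of an arbitrary edge $(u,v)\in E$.

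By the definition of a bucket arrangement, the endpoints of $(u,v)$ lie either in a common bucket $B_i$ or in two consecutive buckets $B_i,B_{i+1}$. In the first case both labels lie in an interval of length $\CCC[i]\le\ell$, giving stretch at most $\ell-1$. In the second case, writing $s_i$ for the smallest label assigned to $B_i$, we have $s_{i+1}=s_i+\CCC[i]$, so the labels of $u$ and $v$ both lie in the interval $[s_i,\,s_i+\CCC[i]+\CCC[i+1]-1]$, yielding stretch at most $\CCC[i]+\CCC[i+1]-1\le 2\ell-1$. Since $\CCC$ is an $(n,\ell)$-capacity vector, every entry is at most $\ell$, so both bounds are valid. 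Taking the maximum over all edges gives bandwidth at most $2\ell-1$.

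There is no real obstacle here; the only thing to be a little careful about is that the bound $\CCC[i]\le\ell$ must hold for \emph{every} bucket, including the first and last, which is guaranteed by the condition $a,b\le\ell$ in the definition of an $(n,\ell)$-capacity vector. The argument is essentially a direct counting of how far apart two labels in the same or adjacent buckets can be.
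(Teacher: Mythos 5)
Your proof is correct and follows essentially the same approach as the paper: construct a linear arrangement that respects the bucket ordering and observe that any edge's endpoints lie within two consecutive buckets of size at most $\ell$ each, giving stretch at most $2\ell-1$. Your version just spells out the interval arithmetic more explicitly.
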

\begin{proof}
Given a $\mathcal{C}$-bucket arrangement for $G$, create a linear arrangement $L$ respecting the bucket arrangement (if $u$ appears in a smaller numbered bucket than $v$, then $L(u) < L(v)$), where vertices in the same bucket are numbered in an arbitrary order. As the capacity of each bucket is at most $\ell$ and each edge spans at most two consecutive buckets, the maximum edge stretch in $L$ is at most $2 \ell -1$.
\end{proof}

\begin{lemma}\label{lem:lb}
Let $G$ be a graph on $n$ vertices, and let $\CCC$ be an $(n,\ell)$-capacity vector. If there exists no $\mathcal{C}$-bucket arrangement for $G$ then the \bandw of $G$ is at least $\ell + 1$.
\end{lemma}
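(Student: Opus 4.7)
I will prove the contrapositive: assuming $G$ has bandwidth at most $\ell$, I will construct a $\mathcal{C}$-bucket arrangement for the given $(n,\ell)$-capacity vector $\mathcal{C}=(a,\ell,\ldots,\ell,b)$.

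The plan is to take any linear arrangement $L : V \to [n]$ of bandwidth at most $\ell$ and slice it according to the capacity vector. Specifically, place the $a$ vertices with $L$-values in $[1,a]$ into bucket $1$; for each middle bucket $i$ with $2 \le i \le \lceil n/\ell \rceil - 1$, place the $\ell$ vertices whose $L$-values lie in $[a+(i-2)\ell+1,\, a+(i-1)\ell]$ into bucket $i$; and place the remaining $b$ vertices into the last bucket. By construction this partition realizes the capacity vector $\mathcal{C}$, so it remains only to verify that every edge has its endpoints in the same bucket or in two consecutive buckets.

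Let $(u,v) \in E$ with $L(u) < L(v)$, and suppose for contradiction that $u$ lies in bucket $i$ and $v$ lies in bucket $j$ with $j \ge i+2$. Because buckets are contiguous intervals of $L$-values ordered left to right, every vertex in bucket $i+1$ has an $L$-value strictly greater than $L(u)$ (larger than all $L$-values in bucket $i$) and strictly less than $L(v)$ (smaller than all $L$-values in bucket $i+2$, which in turn is $\le L(v)$). Hence bucket $i+1$ contributes at least $|$bucket $i+1|$ distinct $L$-values strictly between $L(u)$ and $L(v)$, giving $L(v)-L(u) \ge |\text{bucket } i+1|+1$. Since the constraint $\ell < n/2$ in the definition of an $(n,\ell)$-capacity vector forces $\lceil n/\ell \rceil \ge 3$, and since $i+1 \le j-1 < \lceil n/\ell \rceil$, bucket $i+1$ is a middle bucket of size exactly $\ell$. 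Therefore $L(v)-L(u) \ge \ell+1$, contradicting the assumption that $L$ has bandwidth at most $\ell$.

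The only subtle point is ensuring that the interposed bucket $i+1$ genuinely has $\ell$ vertices, rather than possibly being a short end-bucket of size $a$ or $b$. This is handled by the preceding observation that $\lceil n/\ell \rceil \ge 3$, which guarantees that whenever $j \ge i+2$ the bucket $i+1$ is strictly interior and hence of full size $\ell$. Once this is in place, the contradiction above completes the proof of the contrapositive, and the lemma follows.
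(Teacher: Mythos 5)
Your proof is correct and takes essentially the same route as the paper's: slice a linear arrangement of bandwidth at most $\ell$ into consecutive blocks matching the capacities $(a,\ell,\ldots,\ell,b)$, and note that an edge spanning non-adjacent buckets would have a full interior bucket of $\ell$ vertices strictly between its endpoints, forcing stretch at least $\ell+1$. The paper states this as a brief contradiction argument; you simply spell out the interior-bucket detail it leaves implicit.
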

\begin{proof}
Suppose there exists a linear arrangement $L$ of $G$ of \bandw at most $\ell$. Construct a bucket arrangement placing the first $\mathcal{C}[1]$ vertices of $L$ into the first bucket, the next $\mathcal{C}[2]$ vertices of $L$ into the second bucket, and so on. In the resulting bucket arrangement, no edge spans more than two consecutive buckets. Therefore, a $\mathcal{C}$-bucket arrangement exists for $G$, a contradiction.
\end{proof}

\noindent
We will use the previous fastest $2$-approximation algorithm of Cygan {\em et al.}~\cite{CyganKPW08} as a subroutine. For completeness, we describe this simple algorithm here. 

\begin{proposition}[\cite{CyganKPW08}]\label{prop:slow2approx}
There is a polynomial space $2$-approximation algorithm for the \BANDW problem that takes
%worst-case
$\cO^*(3^n)$ time on connected graphs with $n$ vertices.
\end{proposition}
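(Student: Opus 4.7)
The plan is to reduce the $2$-approximation task to deciding, for each candidate bucket width $\ell \in \{1, 2, \ldots, \lceil n/2 \rceil\}$ in turn, whether $G$ admits an $(n,\ell)$-bucket arrangement. Fixing any left-packed $(n,\ell)$-capacity vector $\CCC_\ell$, I would call a decision procedure $\mathrm{Decide}(G, \CCC_\ell)$ and scan $\ell$ in increasing order. By Lemma~\ref{lem:lb}, the smallest $\ell^*$ that succeeds satisfies $\ell^* \leq \mathrm{bw}(G)$, and by Lemma~\ref{lem:ub}, ordering vertices consistently with the resulting bucket arrangement yields a linear arrangement of bandwidth at most $2\ell^* - 1 \leq 2\,\mathrm{bw}(G) - 1$, which is a $2$-approximation. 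It therefore suffices to exhibit a polynomial-space $\cO^*(3^n)$ decision procedure.

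For the decision procedure I would use a memoization-free recursive search that builds the arrangement bucket by bucket from left to right. The state of the search is a pair $(S, B)$, where $S$ is the set of already-placed vertices and $B \subseteq S$ is the content of the rightmost placed bucket; throughout, I maintain the invariant $N_G(S \setminus B) \subseteq S$, which is exactly what guarantees that no edge spans more than two consecutive buckets. Starting from $(\emptyset, \emptyset)$, each recursive call branches over all candidate next buckets $B' \subseteq V \setminus S$ of the size prescribed by $\CCC_\ell$ and satisfying $N_G(B) \setminus S \subseteq B'$, and recurses on $(S \cup B', B')$. The final answer is YES iff some branch reaches $S = V$. Only one branch is active at a time, so space is polynomial.

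To bound the running time by $\cO^*(3^n)$, I would charge the work at every node of the recursion tree to the pair $(S, B)$ at that node, and observe two things. First, the total number of such pairs (over all $S \subseteq V$ and $B \subseteq S$) is $\sum_{k=0}^n \binom{n}{k} 2^k = 3^n$, since each vertex falls into exactly one of three classes: outside $S$, in $S \setminus B$, or in $B$. Second, although without memoization a single pair could in principle be reached along multiple branches, the left-packed capacity vector fixes the size of every bucket by its index, and combined with the requirement $N_G(B) \setminus S \subseteq B'$ this forces each reachable pair to be visited at most $n^{\cO(1)}$ times. Multiplying the $3^n$ pairs by the polynomial work per visit yields the desired $\cO^*(3^n)$ bound.

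The main obstacle is that last step: proving rigorously that, in the memoization-free search, each pair $(S, B)$ is visited only polynomially many times. One cannot argue this from pure combinatorics of the state graph alone; instead, one has to exploit the forced bucket sizes of $\CCC_\ell$ together with the strict left-to-right ordering imposed by the search tree. Once that bound is established, correctness of the overall $2$-approximation algorithm follows immediately from Lemmas~\ref{lem:ub} and~\ref{lem:lb} via the scan over $\ell$ described above.
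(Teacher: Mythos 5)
Your reduction to a decision problem and the scan over $\ell$ with Lemmas~\ref{lem:lb} and~\ref{lem:ub} matches the paper, but your decision procedure does not, and it contains a genuine gap exactly where you flag it: the claim that each pair $(S,B)$ is visited only $n^{\cO(1)}$ times in the memoization-free bucket-by-bucket search is not just unproven, it is false. Take $G$ a star with center $c$ and $k=3$ buckets of size roughly $\ell \approx n/3$: any final state $(V,B_3)$ with $c \notin B_3$ is reached once for every choice of the first bucket $B_1 \subseteq V \setminus (B_3 \cup \{c\})$ of size $\ell$, i.e.\ $\binom{2\ell-1}{\ell} = 2^{\Theta(n)}$ times, since the constraint $N_G(B_1)\setminus S \subseteq B_2$ only forces $c$ into $B_2$. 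So the intended ``$3^n$ states $\times$ poly visits'' charging scheme collapses. What one would actually have to bound is the number of root-to-node paths, i.e.\ of valid bucket prefixes $(B_1,\dots,B_j)$; without exploiting the constraints this count behaves like the multinomial $\binom{n}{\ell,\dots,\ell}\approx k^n$, which exceeds $3^n$ for $k \ge 4$, and you give no argument that the adjacency constraints always cut it down to $\cO^*(3^n)$ (note also that the first bucket is entirely unconstrained, so dead branches are plentiful and cannot be charged to completed arrangements).

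The paper avoids this difficulty with a different, vertex-by-vertex branching that uses connectivity directly: after trying all $k$ buckets for an arbitrary first vertex, it repeatedly picks an unassigned vertex $u$ having an already-assigned neighbor, say in bucket $i$; then $u$ can only go into buckets $i-1$, $i$, or $i+1$, so the algorithm branches into at most $3$ valid choices per remaining vertex. This gives a search tree of size at most $k\cdot 3^{n-1}$, hence $\cO^*(3^n)$ time and polynomial space, with no need to count how often a configuration recurs. If you want to salvage your bucket-by-bucket scheme, you would either need memoization over the $3^n$ pairs $(S,B)$ (sacrificing polynomial space) or a genuinely new argument bounding the number of valid prefixes; the paper's per-vertex $3$-way branching is the standard way to get both the time and the space bound simultaneously.
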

\begin{proof}
Let $G$ be a connected graph on $n$ vertices.
For $\ell$ increasing from $1$ to $\lceil n/2 \rceil$, the algorithm does the following.
Let $\mathcal{C}$ be an $(n,\ell)$-capacity vector. The algorithm goes over all the $k=\left\lceil \frac{n}{\ell} \right\rceil$ choices for assigning the first vertex to some bucket in a $\mathcal{C}$-bucket arrangement. The algorithm then chooses an unassigned vertex $u$ which has at least one neighbor that has already been assigned to some bucket. Assume that a neighbor of $u$ is assigned to the bucket~$i$. Now there are at most three choices of buckets ($i-1$, $i$, and $i+1$) for assigning vertex $u$. Some of these choices may be invalid either because of the capacity constraints of the bucket or because of the previous assignments of (other) neighbors of $u$. If the choice is valid, the algorithm recurses by assigning $u$ to that bucket.
Let $\ell'$ be the smallest integer for which the algorithm succeeds, in some branch, to place all vertices of $G$ into buckets in this way. Then, by Lemma~\ref{lem:ub}, $G$ has \bandw at most $2\ell'-1$ and by Lemma~\ref{lem:lb}, $G$ has \bandw at least $\ell'$. Thus, the algorithm outputs $2\ell'-1$, which is a $2$-approximation for the \bandw of $G$.
As the algorithm branches into at most $3$ cases for each of the $n$ vertices (except the first one), and all other computations only contribute polynomially to the running time of the algorithm, this algorithm runs in worst-case $\cO^*(3^n)$ time using only polynomial space.
%The algorithm, recurses into $3$ subproblems where $u$ is assigned to buckets $i-1$, $i$, and $i+1$, respectively. Those subproblems where 
%Those subproblems where the number of vertices that have already been assigned to a bucket equals its capacity, need to be ignored. 
\end{proof}

\noindent
We now show another simple algorithm based on a divide-and-conquer strategy that given an $(n,\ell)$-capacity vector $\CCC$, decides whether a $\mathcal{C}$-bucket arrangement exists for a connected graph $G$.

\begin{proposition}\label{prop:generalalgo}
There is an algorithm that has as input a connected graph $G$ on $n$ vertices and an $(n,\ell)$-capacity vector $\CCC$ with $\ell < n/2$ and decides whether $G$ has a $\CCC$-bucket arrangement in $\cO^*\left( \binom{n}{\ell} \cdot \binom{n/2}{\ell} \cdot 2^{4\ell} \cdot 3^{n/4} \right)$ time.
%decides if there exists a $\mathcal{C}$-bucket arrangement for $G$ in $\cO^*\left( {n \choose \ell} \cdot {n/2 \choose \ell} \cdot 2^{4\ell} \cdot 3^{n/4} \right)$ time.
\end{proposition}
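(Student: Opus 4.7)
The plan is to apply a two-level divide-and-conquer on the bucket sequence, using the algorithm of Proposition~\ref{prop:slow2approx} as the base case on sub-problems of size at most $\lceil n/4 \rceil$. For the divide step, let $m = \lceil k/2 \rceil$ where $k = \lceil n/\ell \rceil$ is the number of buckets, and enumerate the $\binom{n}{\ell}$ candidate sets $M$ that could fill bucket $m$. Since an edge of a bucket arrangement can only span consecutive buckets, in any valid extension each connected component of $G \setminus M$ must lie entirely to the left or entirely to the right of $M$. Connectedness of $G$ forces every such component to contain a neighbor of $M$; since $N_G(M)$ has to fit into buckets $m-1$ and $m+1$ of size $\ell$ each, $|N_G(M)|\leq 2\ell$, so there are at most $2^{2\ell}$ ways to partition the components of $G \setminus M$ between left and right. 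For each such partition I would verify the two capacity constraints $|V_L|=n_L$, $|V_R|=n_R$, read off the forced subsets $F_L=N_G(M)\cap V_L$ (which must go in bucket $m-1$) and $F_R=N_G(M)\cap V_R$ (which must go in bucket $m+1$), and pass these to two independent recursive sub-problems.

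Each sub-problem is of the same form --- decide whether $G[V_L]$ admits a $\CCC_L$-bucket arrangement where $\CCC_L=(\CCC[1],\ldots,\CCC[m-1])$ is an $(n_L,\ell)$-capacity vector with $n_L\leq n/2$, subject to a forced set in its last bucket --- so I would apply the divide step once more, descending to sub-problems of size $\leq n/4$. At that point the algorithm of Proposition~\ref{prop:slow2approx}, seeded with the forced vertices already placed in the appropriate boundary buckets, decides each sub-problem in $\cO^*(3^{n/4})$ time. The key observation for the running time is that once $M$ and the left/right partition are fixed the two halves are independent: to decide \emph{existence} of an arrangement we only need to verify each half, giving per-partition cost $T(n_L)+T(n_R)\leq 2T(n/2)$ rather than a product. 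This yields the recurrence
\[ T(n) \;=\; \binom{n}{\ell}\cdot 2^{2\ell}\cdot 2\,T(n/2)\cdot\mathrm{poly}(n), \]
which, unrolled twice down to $T(n/4)=\cO^*(3^{n/4})$, gives precisely $\cO^*\!\left(\binom{n}{\ell}\binom{n/2}{\ell}\cdot 2^{4\ell}\cdot 3^{n/4}\right)$.

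The main obstacle will be justifying the $2^{2\ell}$ bound on the number of partitions of $V\setminus M$ that need to be considered: this relies on (i) showing that only partitions respecting the connected components of $G\setminus M$ are relevant, since otherwise some edge would cross the separator, and (ii) bounding the number of such components by $|N_G(M)|\leq 2\ell$ using connectedness of $G$. A secondary technical point is threading the forced-boundary data through the recursion and into Proposition~\ref{prop:slow2approx}'s BFS-style branching; this only requires initialising the branching with the forced vertices already placed in their designated bucket, so the $\cO^*(3^{n/4})$ base-case bound is preserved.
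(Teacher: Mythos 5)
Your overall plan is the paper's own: fill a middle bucket with all $\binom{n}{\ell}$ candidate sets, use connectivity to bound the number of components of the remainder by $2\ell$ (via their forced neighbors in the two adjacent buckets), enumerate the $2^{2\ell}$ left/right assignments, recurse twice, and finish with the $\cO^*(3^{n/4})$ branching algorithm; the recurrence you give is exactly the paper's. However, there is one genuine gap: your inductive claim that ``each sub-problem is of the same form'' is not justified, because the subgraph $G[V_L]$ handed to the recursive call need not be connected, while your divide step (and the statement of the proposition itself) explicitly relies on connectivity --- it is connectivity that gives ``every component of $G\setminus M$ contains a neighbor of $M$'' and hence the $2\ell$ bound on the number of components and the $2^{2\ell}$ bound on left/right assignments. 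At the second level of the recursion, a component of $G[V_L]\setminus M'$ with no neighbor in the new middle set $M'$ escapes your counting argument, so the $2^{2\ell}$ bound as you state it does not follow. The paper closes exactly this hole: after fixing $X$ and the side assignment, it takes the forced set $Z=N_G(X)\cap Y$ (which must occupy the boundary bucket in any case) and \emph{adds edges making $Z$ a clique}, which reconnects $G[Y]$ without changing the set of valid arrangements; the recursion then applies verbatim to a connected instance. Your proposal would need either this clique trick or a replacement argument (e.g., observing that any component of $G[V_L]\setminus M'$ without neighbors in $M'$ is a full component of $G[V_L]$, hence contains a vertex of $F_L$ forced into the boundary bucket, so its side is forced and there are at most $\ell$ such components); as written, the step fails.

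A secondary, minor point: you split at $m=\lceil k/2\rceil$, but for a general $(n,\ell)$-capacity vector (not balanced or left-packed) the capacities on one side of the middle bucket can exceed $n/2$ by up to roughly $\ell/2$, which strictly speaking perturbs the $\binom{n/2}{\ell}$ factor in your bound. The paper instead chooses the split index $i$ so that the capacity sums strictly left and strictly right of bucket $i$ are both at most $n/2$ (such an index always exists), which gives the stated bound cleanly; you should do the same or argue the discrepancy is absorbed. Your handling of the forced boundary vertices at the base case (seeding the branching of Proposition~\ref{prop:slow2approx} with the pre-placed vertices) is fine.
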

\begin{proof}
%Assume without loss of generality that $G$ is connected. 
Let $k=\left\lceil \frac{n}{\ell} \right\rceil$ be the number of buckets in a $\mathcal{C}$-bucket arrangement. Number the buckets from $1$ to $k$ from left to right according to the bucket arrangement.
We solve a slightly more general problem where some subset $Q_1$ of vertices is restricted to bucket~$1$ and some subset $Q_k$ of vertices is restricted to bucket~$k$. If $|Q_1|>\ell$ or $|Q_k|>\ell$, then answer \textsc{No}.
Select a bucket index $i$ such that the sum of the capacities of the buckets numbered strictly smaller than $i$ and the one for the buckets numbered strictly larger than $i$ are both at most $n/2$. %(if $a > b$ then $i=\left\lceil \frac{k}{2} \right\rceil$ and if $a \leq b$ then $i=\frac{k}{2}+1$).
 
The algorithm goes over all possible $\binom{n-|Q_1\cup Q_k|}{\ell}$ choices for filling bucket~$i$ with $\ell$ vertices from $V\setminus (Q_1\cup Q_k)$. Let $X$ be a set of $\ell$ vertices assigned to the bucket~$i$. Given a connected component of $G\setminus X$, note that all the vertices of this connected component must be placed either only in buckets $1$ to $i-1$ or buckets $i+1$ to $k$. 
%Choose for each connected component of $G\setminus X$ if its vertices are mapped to buckets numbered higher or lower than $i$. 
Note that each connected component of $G\setminus X$ contains at least one vertex that is adjacent to a vertex in $X$ (as $G$ is connected). Therefore, for each connected component of $G\setminus X$, at least one vertex is placed into the bucket~$i-1$ or $i+1$. As the capacity of each bucket is at most $\ell$, $G\setminus X$ has at most $2 \ell$ connected components, otherwise there is no $\mathcal{C}$-bucket arrangement where $X$ is assigned to the bucket~$i$.
% Note that this property is independent of $X$ (it holds for is true for any choice of vertices to bucket~$i$). SG: This property is actually dependent of $X$.
Thus, there are at most $2^{2 \ell}$ choices for assigning connected components of $G\setminus X$ to the buckets $1$ to $i-1$ and $i+1$ to $k$. Some of these assignments might be invalid as they might violate the capacity constraints of the buckets or assign vertices from $Q_1$ to the buckets $i+1$ to $k$ or vertices from $Q_k$ to buckets $1$ to $i-1$. We discard these invalid assignments. 
%Keep only those where the number of vertices associated to the left of bucket~$i$ and to the right of
%bucket~$i$ matches the capacities of these buckets.
 
For each choice of $X$ and each valid assignment of the connected components of $G\setminus X$ to the left or right of bucket~$i$, we have now obtained two independent subproblems: one subproblem for the buckets $\{1,\hdots,i-1\}$ and one subproblem for the buckets $\{i+1,\hdots,k\}$. The instances of these subproblems have at most $n/2$ vertices. Consider the subproblem for the buckets $\{1,\hdots,i-1\}$ (the other one is symmetric) and let $Y$ be the set of vertices associated to these buckets. Let $Z \subseteq Y$ be the set of vertices in $Y$ that have at least one neighbor in $X$. Now, restrict $Z$ to bucket~$i-1$ and add edges to the subgraph $G[Y]$ such that $Z$ becomes a clique. This does not change the problem, as all the vertices in $Z$ must be assigned to the bucket~$i-1$, and $G[Y]$ becomes connected. This subproblem can be solved recursively.

The algorithm performs the above recursion until it reaches subproblems of size at most $n/4$, which corresponds to two levels in the corresponding search tree. On instances of size at most $n/4$, the algorithm invokes the algorithm of Proposition~\ref{prop:slow2approx}, which takes worst-case $\cO^*(3^{n/4})$ time. % (as the instance size is at most $n/4$). 
 
Let $T(n)$ be the running time needed for the above procedure to check whether a graph with $n$ vertices has a bucket arrangement for an $(n,\ell)$-capacity vector.  Then,
\begin{align*}
T(n) &\leq \binom{n}{\ell} \cdot 2^{2 \ell} \cdot \binom{n/2}{\ell} \cdot 2^{2 \ell} \cdot 3^{n/4} \cdot n^{\cO(1)} = \cO^*\left( \binom{n}{\ell} \cdot \binom{n/2}{\ell} \cdot 2^{4\ell} \cdot 3^{n/4} \right).
\end{align*}
This completes the proof of the proposition.
\end{proof}

\noindent
Combining Proposition~\ref{prop:generalalgo} with Lemmas~\ref{lem:ub} and \ref{lem:lb}, we have the following corollary for $2$-approximating the \bandw of a graph.

\begin{corollary}\label{cor:smallbandwidth}
There is an algorithm that, for a connected graph $G$ on $n$ vertices and an integer $\ell \le n$, decides whether the \bandw of $G$ is at least $\ell+1$ or at most $2\ell-1$ in $\cO^*\left( \binom{n}{\ell} \cdot \binom{n/2}{\ell} \cdot 2^{4\ell} \cdot 3^{n/4} \right)$ time.
\end{corollary}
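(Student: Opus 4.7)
The plan is to reduce the stated decision problem to checking the existence of a $\CCC$-bucket arrangement for a suitably chosen capacity vector, and to read off the conclusion through the sandwich provided by Lemmas~\ref{lem:ub} and~\ref{lem:lb}.

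Concretely, in the main case $\ell < n/2$, I would form any $(n,\ell)$-capacity vector $\CCC$ (for instance the left-packed one, consisting of $\lceil n/\ell \rceil - 1$ entries equal to $\ell$ followed by a last entry $n - (\lceil n/\ell \rceil - 1)\ell$) and invoke the algorithm of Proposition~\ref{prop:generalalgo} on $G$ and $\CCC$. If the subroutine answers that a $\CCC$-bucket arrangement exists, then by Lemma~\ref{lem:ub} the \bandw of $G$ is at most $2\ell - 1$, so the algorithm outputs ``\bandw $\leq 2\ell - 1$''. If the subroutine reports that no $\CCC$-bucket arrangement exists, then by Lemma~\ref{lem:lb} the \bandw of $G$ is at least $\ell + 1$, and the algorithm outputs ``\bandw $\geq \ell + 1$''. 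In either branch the output is consistent with the guarantee demanded by the corollary, and the time bound is inherited directly from Proposition~\ref{prop:generalalgo}.

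The case $\ell \geq n/2$ is not covered by Proposition~\ref{prop:generalalgo}, but in this regime the answer is trivial: the \bandw of any graph on $n$ vertices is at most $n-1 \leq 2\ell - 1$, so the algorithm simply outputs ``\bandw $\leq 2\ell - 1$'' in polynomial time, well within the claimed bound.

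I do not anticipate a genuine obstacle here: the corollary is essentially a packaging of Proposition~\ref{prop:generalalgo} together with the two bucket-arrangement lemmas, so the main thing to check carefully is just the boundary regime $\ell \geq n/2$, which is disposed of by the trivial upper bound $n-1$ on the \bandw. Consequently the proof amounts to a short case split plus one invocation of the previously established decision algorithm.
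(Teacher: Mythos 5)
Your proposal is correct and follows essentially the same route as the paper: for $\ell \ge n/2$ answer trivially that the \bandw is at most $2\ell-1$, and otherwise run the algorithm of Proposition~\ref{prop:generalalgo} on an $(n,\ell)$-capacity vector and translate the yes/no answer via Lemmas~\ref{lem:ub} and~\ref{lem:lb}. Nothing further is needed.
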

\begin{proof}
If $\ell \ge n/2$, the \bandw of $G$ is at most $2 \ell -1$. Otherwise, use Proposition~\ref{prop:generalalgo} with $G$ and some $(n,\ell)$-capacity vector $\mathcal{C}$ to decide if there exists a $\mathcal{C}$-bucket arrangement for $G$. If so, then the bandwidth of $G$ is at most $2 \ell - 1$ by Lemma~\ref{lem:ub}. If not, then the bandwidth of $G$ is at least $\ell + 1$ by Lemma~\ref{lem:lb}.
\end{proof}

\noindent
The running time of the algorithm of Corollary~\ref{cor:smallbandwidth} is interesting for small values of $\ell$. Namely, if $\ell \leq n/26$, the running time is $\cO(1.9737^n)$. In the remainder of this section, we improve Proposition~\ref{prop:generalalgo}. %From Corollary~\ref{cor:smallbandwidth}, we know that if the number of buckets ($k=\left\lceil \frac{n}{\ell} \right\rceil$) is greater than $15$, then the $2$-approximation algorithm outlined in Proposition~\ref{prop:generalalgo} runs in $\cO(1.5369^n)$ time.
We now concentrate on the cases where $k = \lceil n/\ell \rceil  \leq 26$.

Let $\CCC$ be an $(n,\ell)$-capacity vector.  A {\em partial $\CCC$-bucket arrangement} of an induced subgraph $G'$ of $G$ is a placement of vertices of $G'$ into buckets such that: (a) each vertex in $G'$ is assigned to a bucket or to the union of two consecutive buckets (i.e., the vertex is restricted to belong to one of these buckets, but it is not fixed to which one of these two buckets the vertex belongs), and each vertex that is assigned to the union of two consecutive buckets can be assigned to one of these two buckets such that (b) the endpoints of each edge in $G'$ are either in the same bucket or in two consecutive buckets, and (c) at most $\CCC[i]$ vertices are placed in each bucket~$i$. A vertex that is assigned to a bucket is not assigned to the union of two buckets, and vice-versa. Let $\BBB$ be a partial $\CCC$-bucket arrangement of an induced subgraph $G'$. 
%We say that a bucket~$i$ is \emph{full} in $\BBB$ if the number of vertices that have been assigned to it equals its capacity $\CCC[i]$ (no vertex has been assigned to the union of this bucket and some other bucket). We say that two consecutive buckets $i$ and $i+1$ are \emph{jointly full} in $\BBB$ if a vertex subset $Y$ of cardinality equal to the sum of the capacities of $i$ and $i+1$ has been assigned to the union of these two buckets (no vertex has been assigned to a single one of these buckets). 
We say that a bucket~$i$ is \emph{full} in $\BBB$ if the number of vertices that have been assigned to it equals its capacity $\CCC[i]$. We say that two consecutive buckets $i$ and $i+1$ are \emph{jointly full} in $\BBB$ if a vertex subset $Y$ of cardinality equal to the sum of the capacities of $i$ and $i+1$ has been assigned to the union of these two buckets. We always maintain the condition that if a bucket is full then no vertex has been assigned to the union of this bucket and some other bucket, and if two consecutive buckets are jointly full then no vertex has been assigned to any one of these two buckets individually. We say that a bucket is {\em empty} in $\BBB$ if no vertex has been assigned to it nor to the union of this bucket and a neighboring bucket.
\newenvironment{proofidea}{{\noindent\it Proof Outline.}}{\hfill{\qed}\vspace{1ex}}
\begin{proposition}\label{prop:dp}
Let $G$ be a graph on $n$ vertices and $\mathcal{C}$ be a capacity vector of size $n$ and length $k$, where $k$ is an integer constant. Let $\mathcal{B}$ be a partial $\CCC$-bucket arrangement of some induced subgraph $G'$ of $G$ such that in $\BBB$ some buckets are full, some pairs of consecutive buckets are jointly full, and all other buckets are empty.  If in $\BBB$ no $3$ consecutive buckets are empty, then it can be decided if $\mathcal{B}$ can be extended to a $\mathcal{C}$-bucket arrangement in polynomial time.
\end{proposition}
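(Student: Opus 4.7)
The plan is to reduce the extension-feasibility question to a polynomial-sized combinatorial problem, exploiting that $k$ is constant and that full buckets separate the remaining assignment task into independent subproblems. First, for each vertex $v$ that is not yet uniquely placed in $\mathcal{B}$, I would compute a candidate domain $D_v \subseteq [k]$ of buckets into which $v$ may legally go: for a vertex of a jointly-full pair $\{i, i+1\}$ this set is a subset of $\{i, i+1\}$, while for a free vertex $v$ (one not in $G'$) I start from the set of empty buckets and intersect, over each placed neighbor $w$ of $v$, the at most four buckets within bucket-distance one from any possible position of $w$. Because $k$ is constant, this computation runs in polynomial time and every $D_v$ has constant size.

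Next, I would decompose the instance along the full buckets. Since any edge of $G$ must join two vertices placed in the same or adjacent buckets of a $\mathcal{C}$-bucket arrangement, no edge can cross a full bucket, so each maximal run of non-full buckets in $\mathcal{B}$ (a \emph{segment}) can be completed independently. Each segment has at most $k$ buckets, a constant number of empty buckets and jointly-full pairs, and by hypothesis no three consecutive empty buckets. Within each segment I would iteratively tighten the domains by arc-consistency: for every edge $(u,v)$ between undecided vertices, remove from $D_u$ any value $a$ that has no compatible $b \in D_v$ with $|a-b|\le 1$, and symmetrically for $D_v$. This terminates in polynomial time and produces pairwise-compatible tight domains.

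Once the domains are arc-consistent, the residual problem inside each segment becomes a capacity-respecting assignment problem. In the simpler segment patterns ($E$, $EE$, $J$, $JJ$, $EJ$, $JE$, etc.) every undecided vertex either becomes forced or lies in a two-element domain of consecutive buckets, so all remaining edge constraints are satisfied automatically and feasibility reduces to a bipartite matching with integer bucket capacities, solvable via standard max-flow. In richer patterns such as $EJE$, a connected component of undecided vertices may acquire a non-contiguous two-element domain $\{i, i+3\}$ and must therefore be routed atomically to one of the two empty buckets; this choice may also force one side of the adjacent jointly-full pair. I would capture all such atomic binary component-level choices inside one segment by a single pseudopolynomial dynamic program indexed by the vector of remaining capacities of the segment's buckets; since each capacity is at most $\ell\le n$ and each segment contains $O(k)=O(1)$ buckets, the DP table has polynomial size.

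The hard part will be to justify rigorously that all couplings between pair-splits and free-component placements inside a segment can indeed be compressed into such a polynomial-size DP state. The intended argument is that each segment contains only $O(1)$ structurally distinct coupling patterns (each jointly-full pair borders at most two empty buckets and each empty bucket borders at most two non-empty buckets), so the DP need only track, for each of the $O(1)$ buckets of the segment, the number of undecided vertices tentatively assigned so far. Summed over all segments, this yields a feasibility test of total time $n^{O(k)}$, which is polynomial under the assumption that $k$ is an absolute constant.
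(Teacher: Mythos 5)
Your plan founders on the claim that the full buckets decouple the instance, i.e.\ that ``each maximal run of non-full buckets (a \emph{segment}) can be completed independently.'' That is false, and the failure is exactly the crux of this proposition. Consider a connected component $V_p$ of the unassigned vertices whose only already-placed neighbors lie in a single full bucket $i$ with both $i-1$ and $i+1$ empty (this is the typical situation in the paper's applications, e.g.\ $k=4,5$, where the middle bucket is filled with $X$ and every component of $G\setminus X$ touches $X$); such a component, and likewise any component with no placed neighbors at all, may be routed entirely to the left of bucket $i$ or entirely to its right, i.e.\ into \emph{different} segments. The feasibility question is then whether the components can be partitioned among the segments (and among the empty buckets inside them) so that all capacities are met simultaneously --- a global, subset-sum-like coupling across segments. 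A collection of independent per-segment tests, each with its own capacity-vector DP, cannot decide this; it would accept instances where each segment is individually completable but only by claiming the same components twice, and reject feasible ones where the left/right split must be balanced. (This coupling is also why Proposition~\ref{prop:generalalgo} pays a $2^{2\ell}$ factor to enumerate left/right assignments of components; your argument would make that factor unnecessary even when $k$ is large, which should have been a warning sign.) Your secondary reduction of the ``simple'' patterns to arc-consistency plus bipartite matching inherits the same problem, since a component's admissible buckets may lie in two different segments.

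The repair is to make your pseudopolynomial DP \emph{global} rather than per-segment: process the connected components of the unassigned graph one at a time, and let the state record, for every empty bucket, every pair of adjacent empty buckets, and every jointly-full pair in $\BBB$, the number of vertices placed so far (together with the counts of vertices forced toward a specific side by neighbors in an adjacent full or filled bucket). Since $k$ is a constant, this state space has size $n^{O(k)}$ and the test is polynomial. This is precisely the paper's proof: a dynamic program indexed by the component counter $p$ and by per-bucket occupancy/forcing counters $s_i$, $t_{i,i+1}$, $x_i$, $f_i$, with the ``no three consecutive empty buckets'' hypothesis guaranteeing that each component occupies at most two (adjacent) empty buckets so that these counters suffice. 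Your within-segment observations (forced vertices next to full buckets, atomic routing of components with non-contiguous options, splitting of jointly-full pairs) are sound and reappear in that global DP, but as written the proposal assumes away the main difficulty rather than solving it.
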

\begin{proof}
Let $G=(V,E)$ and $G'=(V',E')$.  Let $r'$ be the number of connected components of $G \setminus V'$ (the graph induced by $V \setminus V'$), and let $V_{\ell}'$ represent the set of vertices in the $\ell$th connected component of $G \setminus V'$.

If the bucket~$i$ is full in $\BBB$, let $X_i$ denote the set of vertices assigned to it. If the buckets $i$ and $i+1$ are jointly full in $\BBB$, let $X_{i,i+1}$ denote the set of vertices assigned to the union of buckets $i$ and $i+1$.

We use dynamic programming to start from a partial bucket arrangement satisfying the above conditions to construct a $\CCC$-bucket arrangement.
During its execution, the algorithm assigns vertices to the buckets which are empty in $\BBB$.
A vertex is always assigned to the union of two consecutive empty buckets or to a single empty bucket.
It can only be assigned to a single empty bucket if that bucket has no neighboring empty bucket.
The idea is to iteratively assign the $V_{\ell}'$ to empty buckets and to maintain only a count of the number of vertices constrained to buckets in various ways.

Note that if $V_{\ell_1}'$ and $V_{\ell_2}'$ have a common neighbor in $X_{i,i+1}$, then $V_{\ell_1}'$ and $V_{\ell_2}'$ need to be assigned to the same bucket(s). On the other hand,
in order to determine how many vertices from $X_{i,i+1}$ are constrained to bucket~$i$, we cannot treat $V_{\ell_1}'$ and $V_{\ell_2}'$ separately.
Obtain $\mathcal{V}=\{V_1, \dots, V_r\}$ from $\mathcal{V}' = \{V_1', \dots, V_{r'}'\}$ by repeatedly merging $V_{\ell_1}',V_{\ell_2}'\in \mathcal{V}'$ if they have a common neighbor
in two consecutive buckets that are jointly full.

The dynamic programming algorithm constructs a table $T$, which has the following indices.
\begin{itemize}
\item An index $p$, representing the subproblem constrained to $V_1, \dots, V_p$.
\item For every empty bucket~$i$ in $\BBB$ such that neither the bucket~$i-1$ nor the bucket~$i+1$ is empty, it has an index $s_i$, representing the number of vertices assigned to the bucket~$i$.
\item For every two consecutive empty buckets $i$ and $i+1$ in $\BBB$, it has indices $t_{i,i+1}$, $x_i$, and $x_{i+1}$. The index $t_{i,i+1}$ represents the total number of vertices assigned to the buckets $i$ and $i+1$. The index $x_i$ represents the number of vertices assigned to the buckets $i$ and $i+1$ that have at least one neighbor in the bucket~$i-1$. The index $x_{i+1}$ represents the number of vertices assigned to the buckets $i$ and $i+1$ that have at least one neighbor in the bucket~$i+2$.
\item For every two consecutive buckets $i$, $i+1$ which are jointly full in $\BBB$, it has indices $f_i$ and $f_{i+1}$ representing the number of vertices assigned to these buckets that have at least one neighbor in the bucket~$i-1$ ($f_i$) or in the bucket~$i+2$ ($f_{i+1}$).
\end{itemize}
The table $T$ is initialized to $\mathsf{false}$ everywhere, except for the entry corresponding to all-zero indices, which is initialized to $\mathsf{true}$. The rest of the table is built by increasing values of $p$ as described below. Here, we only write those indices that differ in the looked-up table entries and the computed table entry (i.e., indices in the table that play no role in a given recursion are omitted).  We also ignore the explicit checking of the invalid indices in the following description. The algorithm looks at the vertices which are neighbors (in $G$) of the vertices in $V_p$ and have already been assigned.%The algorithm looks at the configuration of the buckets in $\BBB$ and the neighborhood of $V_p$ (in $G$) which has already been assigned.  %For a connected component $C_p$, let $C_p$ denote the set of vertices in $C_p$. %assume that all table entries with invalid indices always remain at $\mathsf{false}$ (we ignore e.
 
If $N_G(V_p)$ contains a vertex from each of the full buckets $i-1$ and $i+1$, no vertex from any other bucket, and bucket~$i$ is empty in $\BBB$, then
\begin{align*}
T[p,s_i,\hdots]=T[p-1,s_i-|V_p|,\hdots].
\end{align*}
If $N_G(V_p)$ contains a vertex from the full buckets $i-1$ and $i+2$, no vertex from any other bucket, and the buckets $i$ and $i+1$ are both empty in $\BBB$, then %$T[p,t_{i,i+1},x_i,x_{i+1},\hdots] = \mathsf{false}  \text{  if } N_G(X_{i-1}) \cap N_G(X_{i+2}) \neq \emptyset$, and otherwise $T[p,t_{i,i+1},x_i,x_{i+1},..]$ equals
%$$
%T[p-1,t_{i,i+1}-|V_p|,x_i-|V_p\cap N_G(X_{i-1})|, x_{i+1}-|V_p\cap N_G(X_{i+2})|,\hdots].
%$$
\begin{align*}
&T[p,t_{i,i+1},x_i,x_{i+1},\hdots] = \\
&\quad\;
\begin{cases}
\mathsf{false} & \hspace*{-.1in}\text{if } N_G(X_{i-1}) \cap N_G(X_{i+2}) \neq \emptyset,\\
T[p-1,t_{i,i+1}-|V_p|,x_i-|V_p\cap N_G(X_{i-1})|, \\
 \; \; \, x_{i+1}-|V_p\cap N_G(X_{i+2})|,\hdots] & \hspace*{-.1in}\text{otherwise}.
\end{cases}
\end{align*}
If $N_G(V_p)$ contains a vertex from the jointly full buckets $i-2$ and $i-1$ and a vertex from the jointly full buckets $i+1$ and $i+2$, but no vertex from any other bucket, and bucket~$i$ is empty in $\BBB$, then %$$ equals
\begin{align*}
T[p,s_i,f_{i-1},f_{i+1},\dots] = T[&p-1,s_i-|V_p|,f_{i-1}-|N_G(V_p)\cap X_{i-2,i-1}|,\\&f_{i+1}-|N_G(V_p)\cap X_{i+1,i+2}|,\dots].
\end{align*}
In the above recursion, observe that, by the definition of $\mathcal{V}$, no $V_q \in \mathcal{V} \setminus V_p$ has a common neighbor with $V_p$ in $G$.
Therefore, the vertices counted towards $f_{i-1}$ and $f_{i+1}$ will not be recounted towards $f_{i-1}$ and $f_{i+1}$ for any $p'>p$.

The recursion for the other possibilities where $V_p$ has neighbors in (at least) two distinct buckets are similar and can easily be deduced. We now consider the cases where $V_p$ has only neighbors in one bucket. Again, we only describe some key-cases, from which all other cases can easily be deduced.

If the vertices in $V_p$ have only neighbors in the full bucket~$i-1$, and the buckets $i-2$ and $i$ are both empty in $\BBB$, but the buckets $i-3$ and $i+1$ are either full or non-existing, then
 \begin{align*}
 T[p,s_{i-2},s_i,\hdots]=T[p-1,s_{i-2}-|V_p|,s_i,\hdots]~\vee~T[p-1,s_{i-2},s_i-|V_p|,\hdots].
 \end{align*}
If the vertices in $V_p$ have only neighbors in the full bucket~$i-1$, and the buckets $i-3$, $i-2$, $i$, and $i+1$ are all empty in $\BBB$, then
 \begin{align*}
  &T[p,t_{i-3,i-2},x_{i-2},t_{i,i+1},x_i,\hdots] =\\
  &\quad\quad\quad T[p-1,t_{i-3,i-2}-|V_p|,x_{i-2}-|V_p\cap N_G(X_{i-1})|,t_{i,i+1},x_i,\hdots]\\
  &\quad\quad\quad \vee~T[p-1,t_{i-3,i-2},x_{i-2},t_{i,i+1}-|V_p|,x_i-|V_p\cap N_G(X_{i-1})|,\hdots].
 \end{align*}
If the vertices in $V_p$ have only neighbors in the jointly full buckets $i$ and $i+1$, and the buckets $i-1$ and $i+2$ are both empty in $\BBB$, but the buckets $i-2$ and $i+3$ are either full in $\BBB$ or non-existing, then
 \begin{align*}
  &T[p,s_{i-1},s_{i+2},f_i,f_{i+1},\hdots] =\\
  &\quad\quad\quad T[p-1,s_{i-1}-|V_p|,s_{i+2},f_i-|N_G(V_p)\cap X_{i,i+1}|,f_{i+1},\hdots]\\
  &\quad\quad\quad \vee~T[p-1,s_{i-1},s_{i+2}-|V_p|,f_i,f_{i+1}-|N_G(V_p)\cap X_{i,i+1}|,\hdots].
 \end{align*}
Again, recall that no $V_q \in \mathcal{V} \setminus V_p$ has a common neighbor with $V_p$ in $G$.
Therefore, the vertices counted towards $f_{i}$ and $f_{i+1}$ will not be counted towards $f_{i}$ and $f_{i+1}$ again for any $p'>p$.
The final answer ($\mathsf{true}$ or $\mathsf{false}$) produced by the algorithm is a disjunction over all table entries whose indices are as follows:
%\begin{itemize}
% \item
  $p=r$,
% \item
  $s_i=\CCC[i]$ for every index $s_i$,
% \item
  $t_{i,i+1}=\CCC[i]+\CCC[i+1]$ for every index $t_{i,i+1}$,
% \item
  $x_i \le \CCC[i]$ for every index $x_i$, and
% \item
  $f_i \le \CCC[i]$ for every index $f_i$.
%\end{itemize}

Since the number of relevant table entries is $O(n^{3k/2+1})$ and each entry can be computed in linear time, the running time of this algorithm is polynomial in $n$ for a constant $k$.
\end{proof}

\begin{remark}
The dynamic programming algorithm in Proposition~\ref{prop:dp} can easily be modified to construct a $\CCC$-bucket arrangement (from any partial bucket arrangement $\BBB$ satisfying the stated conditions), if one exists.
\end{remark}

If the number of buckets is a constant, the following proposition will be crucial in speeding up the procedure for assigning connected components to the right or the left of a bucket filled with a vertex set $X$. Denote by $\smallcomp{G}$ the set of all connected components of $G$ with at most $\sqrt{n}$ vertices and by $\largecomp{G}$ the set of all connected components of $G$ with more than $\sqrt{n}$ vertices. Let $\smallcompvert{G}$ and $\largecompvert{G}$ denote the set of all vertices which are in the connected components belonging to $\smallcomp{G}$ and $\largecomp{G}$, respectively. We now make use of the fact that if there are many small components in $G\setminus X$, several of the assignments of the vertices in $\smallcompvert{G\setminus X}$ to the buckets are equivalent.

A partial $\CCC$-bucket arrangement is \emph{pure} if it does not assign a vertex to the union of two consecutive buckets.
Let $\CCC$ be a capacity vector of size $n$ (i.e., $\sum_i \CCC[i]=n$) and let $\BBB$ be a pure partial $\CCC$-bucket arrangement of an induced subgraph $G'$ of $G$. We say that $\BBB$ \emph{produces} the capacity vector $\CCC'$ if $\CCC'$ is obtained from $\CCC$ by decreasing the capacity $\CCC[i]$ of each bucket~$i$ by the number of vertices assigned to the bucket~$i$ in $\BBB$.

%, we say that $\BBB$ \emph{produces} the capacity vector $\CCC'$, which is obtained from $\CCC$ by decreasing the capacity $\CCC[i]$ of each bucket~$i$ by the number of vertices assigned to bucket~$i$ in $\BBB$.

\begin{proposition}\label{prop:leftright}
Let $G=(V,E)$ be a graph on $n$ vertices. Let $\mathcal{C}$ be a capacity vector of size $n$ and length $k$, where $k$ is an integer constant. Let $j$ be a bucket and $X\subseteq V$ be a subset of $\mathcal{C}[j]$ vertices. Consider all capacity vectors that are produced by the pure partial $\CCC$-bucket arrangements of $G[\smallcompvert{G \setminus X} \cup X]$ in which the vertices in $X$ are assigned to the bucket~$j$. Then, there exists an algorithm which runs in $\cO^*(3^{\sqrt{n}})$ time and takes polynomial space, and enumerates all $($distinct$)$ capacity vectors produced by these pure partial $\CCC$-bucket arrangements.
%where 
% $r:=n-|X|-sum_{C \in \smallcomp{G \setminus X}}|C|$,
%$r:=n-|X|-|\bigcup \smallcomp{G \setminus X}|$,
%such that $\mathcal{C}'[i]=0$ and there exists a $(\mathcal{C}-\mathcal{V'})$-bucket arrangement for $G[X \cup \smallcomp{G \setminus X}]$ in which $X$ is mapped to bucket  $i$ in time $\cO^*(3^{\sqrt{n}})$ and polynomial space.
\end{proposition}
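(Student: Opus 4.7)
The strategy is to process the small connected components of $G \setminus X$ one at a time by dynamic programming, carrying forward the (polynomial-size) set of capacity vectors produced so far. The key preliminary observation is that $X$ fills bucket $j$ entirely (since $|X| = \CCC[j]$), so no other vertex may be placed there. Because every small component $C$ of $G \setminus X$ is connected, the buckets it occupies must form a contiguous range, and hence $C$ lies entirely in $\{1,\hdots,j-1\}$ or entirely in $\{j+1,\hdots,k\}$. Moreover, any vertex of $C$ adjacent in $G$ to a vertex of $X$ is forced into bucket $j-1$ or $j+1$.

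For each small component $C$, I would enumerate its set $\mathcal{V}_C$ of \emph{contribution vectors}: length-$k$ nonnegative integer vectors whose $i$th coordinate counts the vertices of $C$ assigned to bucket $i$ in some valid local arrangement. This reuses the branching scheme of Proposition~\ref{prop:slow2approx} restricted to $C$: pick an arbitrary $v_0 \in C$, guess its bucket (at most $k = \cO(1)$ options), and then process the remaining vertices in BFS order, each having at most three bucket choices compatible with an already-placed neighbor. Each leaf is checked against the edges inside $C$, the edges between $C$ and $X$, and the constraint that bucket $j$ stays unused. For a component of size $m$ the search tree has $\cO^*(3^m)$ leaves, the distinct contribution vectors are deduplicated in a polynomial-size table, and the enumeration is depth-first so it uses only polynomial space.

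For the combining step, order the small components as $C_1,\hdots,C_t$, initialize $S_0 = \{\mathbf{v}\}$ with $\mathbf{v}[j] = |X|$ and $\mathbf{v}[i] = 0$ otherwise, and set $S_\ell = \{\, s + v : s \in S_{\ell-1},\ v \in \mathcal{V}_{C_\ell},\ s + v \le \CCC \,\}$; the algorithm returns $S_t$. Every vector in $S_\ell$ has $k = \cO(1)$ coordinates each bounded by $n$, so $|S_\ell| \le (n+1)^k = \mathrm{poly}(n)$, and each combining step runs in polynomial time and space. The total cost is dominated by the per-component enumerations, $\sum_{\ell} \cO^*(3^{|C_\ell|})$. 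Subject to $|C_\ell| \le \sqrt{n}$ and $\sum_\ell |C_\ell| \le n$, the super-linearity of $3^x$ together with the per-component cap of $\sqrt{n}$ bounds this sum by $\sqrt{n} \cdot 3^{\sqrt{n}} = \cO^*(3^{\sqrt{n}})$.

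The only subtle point, and the step I expect to require the most care in writing up, is completeness: I need to argue that no achievable global capacity vector is missed by treating the components independently. This holds because, once $X$ is fixed in bucket $j$, distinct components of $G \setminus X$ share no edges, so their only interaction is through the global capacity constraint; the inequality $s + v \le \CCC$ captures exactly this, and any genuine partial $\CCC$-bucket arrangement of $G[\smallcomp_v(G \setminus X) \cup X]$ restricts on each $C_\ell$ to a valid local placement whose contribution vector is enumerated in $\mathcal{V}_{C_\ell}$.
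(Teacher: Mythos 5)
Your proposal is correct and follows essentially the same route as the paper: process the small components one by one, run the $3$-way branching of Proposition~\ref{prop:slow2approx} on each component of size at most $\sqrt{n}$ (with bucket $j$ blocked and neighbors of $X$ forced to buckets $j\pm 1$), and carry forward the polynomially many (at most $n^k$, $k$ constant) distinct capacity vectors, giving $\cO^*(3^{\sqrt{n}})$ time and polynomial space. Your variant of enumerating each component's contribution vectors once and combining them by a capped sum is a minor reorganization of the paper's step that reruns the branching for every accumulated capacity vector, with no substantive difference.
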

\begin{proof}
Let  $V_l$ be the vertex set of the $l$th connected component in $\smallcomp{G \setminus X}$. Let $\mathcal{L}_{p}$ denote the list of all capacity vectors
produced by the pure partial $\CCC$-bucket arrangements of
% corresponding to the assignment of $X$ to bucket~$i$ and a valid assignment of the vertices of
$G[\bigcup_{1 \le l \le p} V_l \cup X]$ in which the vertices in $X$ are assigned to the bucket~$j$. Note that
the number of distinct vectors in $\mathcal{L}_p$ is $O(n^k)$.
%since $k$ is a constant, the number of distinct vectors in $\mathcal{L}_p$ is polynomial (at most $n^k$). 
Then, $\mathcal{L}_1$ can be obtained by executing the algorithm of Proposition~\ref{prop:slow2approx} on the graph $G[V_1]$ with a capacity vector $\CCC'$ which is the same as $\CCC$ except that $\CCC'[i]=0$. In general, $\mathcal{L}_p$ can be obtained from $\mathcal{L}_{p-1}$ by executing the algorithm of Proposition~\ref{prop:slow2approx} on the graph $G[V_p]$ for every capacity vector in $\mathcal{L}_{p-1}$. As the size of each connected component in $\smallcomp{G\setminus X}$ is at most $\sqrt{n}$, the resulting running time is $\cO^*(3^{\sqrt{n}})$.
%Then, $L_0$ corresponds to $\mathcal{C}$ where entry $i$ is set to $0$ and 
\end{proof}

%As $k$ is a constant in Proposition~\ref{prop:leftright}, the number of such capacity vectors is polynomial.

\subsection{Exponential Time $2$-Approximation Algorithm for \BANDW}
Let $G=(V,E)$ be the input graph. Our algorithm tests all bucket sizes $\ell$ from $1$ to $\lceil n/2 \rceil$ until it finds an $(n,\ell)$-capacity vector $\mathcal{C}$ such that $G$ has a $\mathcal{C}$-bucket arrangement. For a given $\ell$, let $k=\left\lceil \frac{n}{\ell} \right\rceil$ denote the number of buckets. Our algorithm uses various strategies depending on the value of $k$. The case of $k=1$ is trivial. If $\ell = \lceil n/2 \rceil$, we have at most two buckets and any partition of the vertex set of $G$ into sets of sizes $\ell$ and $n-\ell$ is a valid $\mathcal{C}$-bucket arrangement. If $k \ge 27$, Corollary~\ref{cor:smallbandwidth} gives a running time of $\cO(1.9737^n)$. For all other values of $k$, we will obtain
% the running times displayed in Table~\ref{tab:runtimes}.
running times in $\cO(1.9797^n)$.

Let $I_k$ be the set of all integers lying between $n/k)$ and $n/(k-1)$. We have that $\ell\in I_k$. The basic idea (as illustrated in Proposition~\ref{prop:generalalgo}) is quite simple. The algorithm tries all possible ways of assigning vertices to the middle bucket. Once the vertex set $X$ assigned to the middle bucket is fixed and the algorithm has decided for each connected component of $G\setminus X$ if the connected component is to be assigned to the buckets to the left or to the right of the middle bucket, the problem breaks into two independent subproblems on buckets which are to the left and to right of the middle bucket. To get the claimed running time, we build upon this idea to design individualized techniques for different $k$s (between $3$ and $26$). For each case, if $G$ has at least one $\CCC$-bucket arrangement for an $(n,\ell)$-capacity vector $\CCC$, then one such arrangement is constructed. We know that if $G$ has no $\CCC$-bucket arrangement for an $(n,\ell)$-capacity vector $\CCC$ then the \bandw of $G$ is at least $\ell+1$ (Lemma~\ref{lem:lb}), and if it has one then its \bandw is at most $2\ell-1$ (Lemma~\ref{lem:ub}). %For simplicity, we assume that $n$ is even (therefore, both the left most and the right most buckets in an $(n,\ell)$-capacity vector have the same capacity of $(n-(k-2)\ell)/2$). We now look at the various cases.
If $k=8,10,$ or $12$, the algorithm uses a left-packed $(n,\ell)$-capacity vector $\CCC$, and otherwise, the algorithm uses a balanced $(n,\ell)$-capacity vector $\CCC$.

\paragraph{$\mathbf{k=3}$.}  The algorithm goes over all subsets $X \subseteq V$ of cardinality $|X|= \CCC[3] \le \lceil (n-\ell)/2 \rceil$. $X$ is assigned to the bucket~$3$. If the remaining vertices can be assigned to the buckets $1$ and $2$ in a way such that all vertices which are neighbors of the vertices in $X$ (in $G$) are assigned to the bucket~$2$, then $G$ has a $\CCC$-bucket arrangement where $\CCC$ has length $3$. %This can be done in polynomial time by invoking the algorithm of Proposition~\ref{prop:dp}. 
The worst-case running time for this case is $\max_{\, \ell \in I_3} \cO^*( \binom{n}{|X|})$.

\paragraph{$\mathbf{k=4}$ or $\mathbf{k=5}$.} The algorithm goes over all subsets $X \subseteq V$ with $|X|=\ell$. $X$ is assigned to the bucket~$3$.  Then, we can conclude using the dynamic programming algorithm from Proposition~\ref{prop:dp} (see also the remark following it). %Let $\comp{G \setminus X}=\{C_1, C_2, \hdots, C_q\}$ be the vertex sets of the connected components of $G \setminus X$. The algorithm picks a subset $\comp_l{G \setminus X} \subset \comp{G \setminus X}$ such that $\sum_{C_i \in \comp_l{G \setminus X}} |C_i| = \lceil (n-2\ell)/2 \rceil$  (this can again be done in polynomial time using the dynamic programming alorithm of Proposition~\ref{prop:dp}). The vertices belonging to components in $\comp_l{G \setminus X}$ are assigned to bucket~$1$. The remaining vertices are assigned to buckets $3$ and $4$ such that all vertices which are neighbors of the vertices in $X$ are assigned to bucket~$3$. 
The worst-case running time for these cases are $\max_{\, \ell \in I_k} \cO^*( \binom n \ell)$.

%\paragraph{$\mathbf{k=5}$.} The idea is similar as with $k=4$. The algorithm goes over all subsets $X \subseteq V$, $|X|=\ell$. $X$ is assigned to bucket~$3$. Then, we can conclude by Proposition~\ref{prop:dp} in polynomial time. The worst-case running time for this case is $\max_{\, \ell \in I_5} \cO^*\left( \binom n {\ell} \right)$.

\paragraph{$\mathbf{k=6}$.} If $k=6$, the algorithm goes through all subsets $X \subseteq V$ with $|X|=2\ell$. $X$ is assigned to the union of buckets $3$ and $4$ (i.e., some non-specified $\ell$ vertices from $X$ are assigned to the bucket~$3$, and the remaining vertices of $X$ are assigned to the bucket~$4$). Then, we can again conclude by the algorithm from Proposition~\ref{prop:dp}. The worst-case running time for this case is $\max_{\, \ell \in I_6} \cO^*\left(\binom n {2\ell}\right)$.

\begin{table}[htp]
\begin{small}
\begin{eqnarray*}
\hline\\
k & ~~\mbox{Running time}~~ & \mbox{Expression}\\
\hline \\
k \leq 2 & \mbox{poly}(n) & \\
k = 3 & \cO(1.8899^n) & \max_{\ell \in I_3} \left \{ \binom n {\frac{n-\ell}{2}} \right\} = \binom{n}{\frac n 3}\\
k = 4 & \cO(1.8899^n) & \max_{\ell \in I_4} \left\{\binom{n}{\ell} \right\} = \binom{n}{\frac n 3}\\
k = 5 & \cO(1.7548^n) & \max_{\ell \in I_5}\left \{ \binom n \ell \right \} = \binom{n}{\frac n 4}\\
k = 6 & \cO(1.9602^n) & \max_{\ell \in I_6} \left \{ \binom n {2\ell} \right \} = \binom{n}{\frac{2n}{5}}\\
k = 7 & \cO(1.9797^n) & \max_{\ell \in I_7} \left\{ {n \choose \ell} \cdot 2^{\cO(\sqrt{n})} \cdot {\frac{(n-\ell)}{2} \choose \frac{(n-5\ell)}{2} } \right\} = \binom{n}{\frac n 7} \cdot \binom{\frac{3n}{7}}{\frac n 7} \cdot 2^{\cO(\sqrt{n})}\\
k = 8 & \cO(1.9797^n) & \max_{\ell \in I_8} \left \{ \binom{n}{\ell} \cdot 2^{\cO(\sqrt{n})} \cdot \max \left  \{\binom{3\ell}{\ell}, \binom{n-4\ell}{\ell} \right \} \right\} = \binom{n}{\frac n 7} \cdot \binom{\frac{3n}{7}}{\frac n 7} \cdot 2^{\cO(\sqrt{n})} \\
k = 9 & \cO(1.8937^n) & \max_{\ell \in I_9} \left\{ {n \choose \ell} \cdot 2^{\cO(\sqrt{n})} \cdot {\frac{(n-\ell)}{2} \choose \ell } \right\} = \binom{n}{\frac n 8} \cdot \binom{\frac{7n}{18}}{\frac n 8} \cdot 2^{\cO(\sqrt{n})}\\
%k = 10 & \cO(1.8473^n) & \max_{\ell \in I_{10}} \left\{ {n \choose \ell } \cdot 2^{\cO(\sqrt{n})} \cdot \binom{\frac n 2}{\ell} \right\} = \binom{n}{\frac n 9} \cdot \binom{\frac n 2}{\frac n 9} \cdot 2^{\cO(\sqrt{n})} \\
k = 10 & \cO(1.8199^n) & \max_{\ell \in I_{10}} \left\{ {n \choose \ell } \cdot 2^{\cO(\sqrt{n})} \cdot \max \left\{ \binom{4\ell}{\ell}, \binom{n-5\ell}{\ell} \right\} \right\} = \binom{n}{\frac n 9} \cdot \binom{\frac{4n}{9}}{\frac n 9} \cdot 2^{\cO(\sqrt{n})} \\
k = 11 & \cO(1.7568^n) & \max_{\ell \in I_{11}} \left\{ {n \choose \ell} \cdot 2^{\cO(\sqrt{n})} \cdot {\frac{(n-\ell)}{2} \choose \ell} \right\} = \binom{n}{\frac{n}{10}} \cdot \binom{\frac{9n}{20}}{\frac{n}{10}} \cdot 2^{\cO(\sqrt{n})}\\
k = 12 & \cO(1.8415^n) & \max_{\, \ell \in I_{12}} \left\{ {n \choose \ell} \cdot 2^{\cO(\sqrt{n})} \cdot \max \left\{ {5\ell \choose \ell}, {n-6\ell \choose 2\ell} \right\} \right\} = \binom{n}{\frac{n}{11}} \cdot \binom{\frac{5n}{11}}{\frac{2n}{11}} \cdot 2^{\cO(\sqrt{n})}\\
13 \le k \le 23 & \cO(1.9567^n) & \max_{\ell \in I_{k}} \left\{ {n \choose \ell} \cdot {n/2 \choose \ell} \cdot {n/4 \choose \ell} \cdot 2^{\cO(\sqrt{n})} \right\} = \binom{n}{\frac{n}{12}} \cdot \binom{\frac{n}{2}}{\frac{n}{12}} \cdot \binom{\frac{n}{4}}{\frac{n}{12}} \cdot 2^{\cO(\sqrt{n})}\\
24 \le k \le 26 & \cO(1.6869^n) & \max_{\ell \in I_{k}} \left\{ {n \choose \ell} \cdot {n/2 \choose \ell} \cdot {n/4 \choose \ell} \cdot {n/8 \choose \ell} \cdot 2^{\cO(\sqrt{n})} \right\} = 2^{\cO(\sqrt{n})} \cdot \prod_{i=0}^3 \binom{\frac{n}{2^i}}{\frac{n}{23}}\\
k \ge 27 & \cO(1.9737^n) & \max_{\ell \in I_{k}} \left\{ {n \choose \ell} \cdot {n/2 \choose \ell} \cdot 2^{4\ell} \cdot 3^{\frac{n}{4}} \right\} =  \binom{n}{\frac{n}{26}} \cdot \binom{\frac{n}{2}}{\frac{n}{26}} \cdot 2^{\frac{2n}{13}} \cdot 3^{\frac{n}{4}}\\
\hline
\end{eqnarray*}
\caption{\label{tab:runtimes} Running time of the $2$-approximation algorithm for \BANDW according to the number of buckets $k=\lceil n/\ell \rceil$. $I_k$ is the set of all integers lying between $n/(k-1)$ and $n/k$. The final running time is dominated by the cases of $k=7$ and $k=8$ (when $\ell$ is close to $n/7$).}
\end{small}
\end{table}

\paragraph{$\mathbf{k=7}$.} The algorithm goes through all subsets $X \subseteq V$ with $|X|=\ell$. $X$ is assigned to the bucket~$4$. For each such $X$, the algorithm uses Proposition~\ref{prop:leftright} to enumerate all possible capacity vectors produced by the pure partial $\CCC$-bucket arrangements of $G[\smallcompvert{G \setminus X} \cup X]$ (with $X$ assigned to the bucket~$4$). This step can be done in $\cO^*(3^{\sqrt{n}})$ time. There are only polynomially many such (distinct) capacity vectors. For each such capacity vector $\CCC'$, the algorithm goes through all choices of assigning each connected component in $\largecomp{G \setminus X}$ to the buckets $1$ to $3$ or to the buckets $5$ to $7$. Thus, we obtain two independent subproblems on the buckets $1$ to $3$ and on the buckets $5$ to $7$. As the number of components in $\largecomp{G \setminus X}$ is at most $\sqrt{n}$ (as each connected component has at least $\sqrt{n}$ vertices), going through all possible ways of assigning each connected component in $\largecomp{G\setminus X}$ to the buckets numbered smaller or larger than $4$ takes $\cO^*(2^{\sqrt{n}})$ time.  Some of these assignments may turn out to be invalid.
%$\mathcal{C}'$, where $r:=n-|X|-\sum_{C \in \smallcomp{G \setminus X}}|C|$, such that $\mathcal{C}'[4]=0$ and there exists a $(\mathcal{C}-\mathcal{C'})$-bucket arrangement for $G[X \cup \smallcomp{G \setminus X}]$ in which $X$ is mapped to bucket  $4$, in time $\cO^*(6^{\sqrt{n}})$. 
For each valid assignment, let $V_1$ denote the vertex set assigned to the buckets $1$ to $3$. Then, the vertices of $V_1$ are assigned to the buckets $1$ to $3$ as described in the case with $3$ buckets with the capacity vector $(\CCC'[1], \CCC'[2], \CCC'[3])$ and with the additional restriction that all vertices in $V_1$ which are neighbors of the vertices in $X$ need to be assigned to the bucket~$3$. The number of vertices in $V_1$ is at most $\lceil (n-\ell)/2 \rceil$ (as $\CCC$ is balanced). Now the size of bucket~$1$ is $\CCC'[1] \le \lceil (n-5\ell)/2 \rceil$. Let $n_1= \lceil (n-\ell)/2 \rceil$ and $\ell_1=\lceil (n-5\ell)/2 \rceil$. Since $n_1\ge 2 \ell_1$, there are at most $\binom{n_1}{\ell_1}$ choices to assign a subset of $V_1$ to bucket~$1$. If $V_1$ has at least one valid bucket arrangement into $3$ buckets (with vertices in $V_1$ neighboring the vertices in $X$ assigned to the bucket~$3$), then the above step will construct one in worst-case $\cO^*(\binom{n_1}{\ell_1})$ time. The algorithm uses a similar approach for $V_2=V\setminus (V_1 \cup X)$ with the buckets $5$ to $7$. Since, the algorithm tries out every subset $X$ for bucket~$4$, the worst-case running time for this case is 
\begin{equation*}
\max_{\, \ell \in I_7}\cO^*\left( \binom{n}{\ell} \cdot \left( 3^{\sqrt{n}} + 2^{\sqrt{n}} \cdot \binom{n_1}{\ell_1} \right) \right) = \max_{\, \ell \in I_7}\cO^*\left( \binom{n}{\ell} \cdot 2^{\cO(\sqrt{n})} \cdot \binom{n_1}{\ell_1} \right) \enspace.
\end{equation*}

\paragraph{$\mathbf{k=8}$.} The algorithm uses a left-packed $(n,\ell)$-capacity vector $\CCC$ for this case. The algorithm goes through all subsets $X \subseteq V$ with $|X|=\ell$. $X$ is assigned to the bucket~$4$. The remaining analysis is similar to the case with $7$ buckets. %Proposition~\ref{prop:leftright} is used to enumerate all possible $r$-capacity vectors $\mathcal{C}'$, where $r:=n-|X|-\sum_{C \in \smallcomp{G \setminus X}}|C|$, such that $\mathcal{C}'[4]=0$ and there exists a $(\mathcal{C}-\mathcal{C'})$-bucket arrangement for $G[X \cup \smallcomp{G \setminus X}]$ in which $X$ is mapped to bucket~$4$. For each such $r$-capacity vector, go through all choices of assigning each connected component in $\largecomp{G \setminus X}$ to the left or the right of bucket~$4$. Then, we obtain two independent subproblems on buckets $1$ to $3$ with vertex set $V_1$ and on buckets $5$ to $8$ with vertex set $V_2$, respectively. The assignment of vertices in $V_1$ to buckets $1$ to $3$ is carried out as in the case with $3$ buckets with the new capacity vector $(\CCC'[1],\hdots ,\CCC'[3])$ and the additional restriction that all vertices in $V_1$ which are neighbors of the vertices in $X$ need to be assigned to bucket~$3$. The vertices in $V_2$ are assigned to buckets $5$ to $8$ as in the case with $4$ buckets with the capacity vector $(\CCC'[5],\hdots,\CCC'[8])$ and the restriction that the vertices in $V_2$ neighboring the vertices in $X$ are assigned to bucket~$5$. 
The algorithm considers each capacity vector $\CCC'$ produced by the pure partial $\CCC$-bucket arrangements of $G[\smallcompvert{G \setminus X} \cup X]$, where $X$ is assigned to the bucket~$4$,
and each valid choice of assigning the components in $\largecomp{G \setminus X}$ to the left or the right of the bucket~$4$. Let $V_1$ denote the vertex set assigned to the buckets 1 to 3.
Buckets $1$ to $3$ have a joint capacity of $3\ell$ (as $\CCC$ is left-packed), and there are at most $\binom{3\ell}{\ell}$ choices to assign a subset of $V_1$ of size $\CCC'[1]$ to bucket~$1$.
Buckets $5$ to $8$ have a joint capacity of $n-4\ell$, and there are at most $\binom{n-4\ell}{\ell}$ choices to assign a subset of $V\setminus (V_1\cup X)$ of size $\CCC'[7]$ to bucket~$7$. The worst-case running time for this case is $$\max_{\, \ell \in I_8} \cO^* \left (\binom{n}{\ell} \cdot 2^{\cO(\sqrt{n})} \cdot \max\left \{\binom{3\ell}{\ell}, \binom{n-4\ell}{\ell} \right \} \right ).$$
%The terms in the max expression come from the cases with $3$ and $4$ buckets.

%The first term in the $\max$ expression comes from noting that buckets $1$ to $3$ have a joint capacity of $3\ell$ (as the vector is left-packed) and using the algorithm for t 

%finding a bucket arrangement for $G[V_1]$ and the second term comes from finding a bucket arrangement for $G[V_2]$.

\paragraph{$\mathbf{k=9}$ or $\mathbf{k=11}$.} The algorithm goes through all subsets $X \subseteq V$ with $|X|=\ell$. $X$ is assigned to the bucket~$\lceil k/2 \rceil$. As in the previous two cases, Proposition~\ref{prop:leftright} is invoked for $G[\smallcompvert{G \setminus X} \cup X]$ (with $X$ assigned to the bucket~$\lceil k/2 \rceil$). For each capacity vector generated by Proposition~\ref{prop:leftright}, the algorithm considers every possible way of assigning each connected component in $\largecomp{G\setminus X}$ to the buckets $1$ to $\lceil k/2 \rceil - 1$ or to the buckets $\lceil k/2 \rceil +1$ to $k$. Each assignment gives rise to two independent subproblems --- one on vertices $V_1$ assigned to the buckets $1$ to $(k-1)/2$, and one on vertices $V_2$ assigned to the buckets $(k+3)/2$ to $k$ (with vertices in $V_1$ and $V_2$ neighboring the vertices in $X$ assigned to the buckets $(k-1)/2$ and $(k+3)/2$, respectively). The algorithm solves these subproblems recursively as in the cases with $4$ or $5$ buckets. Let $n_1 = \lceil (n-\ell)/2 \rceil$. Then, the worst-case running times are $\max_{\, \ell \in I_k} \cO^*(\binom{n}{\ell} \cdot 2^{\cO(\sqrt{n})} \cdot \binom{n_1}{\ell})$.

\paragraph{$\mathbf{k=10}$ or $\mathbf{k=12}$.}  The algorithm uses a left-packed $(n,\ell)$-capacity vector $\CCC$ for these cases. The algorithm goes through all subsets $X \subseteq V$ with $|X|=\ell$. $X$ is assigned to the bucket~$k/2$.
%Obtain two independent subproblems via Proposition~\ref{prop:leftright} and the enumeration of subsets of $\largecomp{G\setminus X}$ --- one on vertices $V_1$ assigned to buckets numbered from $1$ to $k/2-1$, and one on vertices $V_2$ assigned to buckets numbered from $k/2+1$ to $k$. Then, recursively solve these subproblems as in the cases for $k=4,5,$ and $6$. 
The remaining analysis is similar to the previous cases.
In the two independent subproblems, generated for each capacity vector $\CCC'$ and assignment of the connected components in $\largecomp{G\setminus X}$ to the left or right of bucket~$k/2$,
the algorithm fills the buckets $3$ and $8$ if $k=10$, and the bucket 3 and the union of the buckets 9 and 10 if $k=12$.
For $k=10$, the worst-case running time is
%  $\max_{\, \ell \in I_{10}} \cO^*(\binom{n}{\ell} \cdot 2^{\cO(\sqrt{n})} \cdot \binom{n/2}{\ell})$.
$\max_{\, \ell \in I_{10}} \cO^*(\binom{n}{\ell} \cdot 2^{\cO(\sqrt{n})} \cdot \max \{ \binom{4\ell}{\ell}, \binom{n-5\ell}{\ell} \})$.
For $k=12$, the worst-case running time is $\max_{\, \ell \in I_{12}} \cO^*(\binom{n}{\ell} \cdot 2^{\cO(\sqrt{n})} \cdot \max \{ \binom{5\ell}{\ell}, \binom{n-6\ell}{2\ell} \})$.

\paragraph{$\mathbf{13 \le k \le 26}$.} The algorithm enumerates all subsets $X \subseteq V$ with $|X|=\ell$. $X$ is assigned to the bucket~$\lceil k/2 \rceil$. 
As in the previous cases, Proposition~\ref{prop:leftright} is invoked for $G[\smallcompvert{G \setminus X} \cup X]$. For each capacity vector generated by Proposition~\ref{prop:leftright}, the algorithm looks at every possible way of assigning each connected component in $\largecomp{G\setminus X}$ to the buckets $1$ to $\lceil k/2 \rceil-1$ or to the buckets $\lceil k/2 \rceil +1$ to $k$. Each assignment gives rise to two independent subproblems. %Again, Proposition~\ref{prop:leftright} along with can be used to in time $\cO^*(3^{\sqrt{n}})$, two subproblems on at most $n/2$ vertices. 
For each of these two subproblems, the algorithm proceeds recursively until reaching subproblems with at most $2$ consecutive empty buckets, which can be solved by Proposition~\ref{prop:dp} in polynomial time. If $k\le 23$, this recursion has depth $3$, giving a running time of
\begin{align*}
\max_{\, \ell \in I_{k}} \, \cO^*\left(\binom{n}{\ell} \cdot 2^{\cO(\sqrt{n})} \cdot \binom{n/2}{\ell} \cdot 2^{\cO(\sqrt{n})} \cdot \binom{n/4}{\ell} \cdot 2^{\cO(\sqrt{n})} \right)\enspace.
\end{align*}
If $24 \le k \le 26$, the recursion has depth $4$, giving a running time of
\begin{align*}
\max_{\, \ell \in I_{k}} \, \cO^*\left(\binom{n}{\ell} \cdot 2^{\cO(\sqrt{n})} \cdot \binom{n/2}{\ell} \cdot 2^{\cO(\sqrt{n})} \cdot \binom{n/4}{\ell} \cdot 2^{\cO(\sqrt{n})} \cdot \binom{n/8}{\ell} \cdot 2^{\cO(\sqrt{n})} \right) \enspace.
\end{align*}
 
\paragraph{$\mathbf{k \ge 27}$.} By Proposition~\ref{prop:generalalgo} %and Corollary~\ref{cor:smallbandwidth},
the running time of the algorithm is bounded in this case by
\begin{align*}
\max_{\, \ell \in I_{k}} \, \cO^*\left( {n \choose \ell} \cdot {n/2 \choose \ell} \cdot 2^{4\ell} \cdot 3^{n/4} \right) \enspace.
\end{align*}

\paragraph{Main Result.} Putting together all the above arguments and using 
the
numerical values
 from Table~\ref{tab:runtimes}
%(see~\cite{FurerGK09} for the complete details)
we obtain our main result (Theorem~\ref{thm:2approx}). The running time is dominated by the cases where $k=7$ and $k=8$. The algorithm outputs $2\ell-1$, where $\ell$ is the smallest integer such that $G$ has a bucket arrangement with an $(n,\ell)$-capacity vector. The algorithm requires only polynomial space.

If $G$ is disconnected, the algorithm finds for each connected component $G_i=(V_i,E_i)$ the smallest $\ell_i$ such that $G_i$ has a bucket arrangement corresponding to a $(|V_i|,\ell_i)$-capacity vector and outputs $2\ell_m-1$, where $\ell_m=\max_i\{\ell_i\}$.  %The running time doesn't change because of superadditivity of the running time function. 

\begin{theorem}[Main Theorem]\label{thm:2approx}
There is a polynomial space $2$-approximation algorithm for the \BANDW problem that takes $\cO(1.9797^n)$ time on graphs with $n$ vertices.
\end{theorem}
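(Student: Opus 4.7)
The plan is to iterate over all candidate bucket sizes $\ell$ from $1$ to $\lceil n/2 \rceil$, deciding for each $\ell$ whether $G$ admits a $\CCC$-bucket arrangement for some $(n,\ell)$-capacity vector $\CCC$, and outputting $2\ell'-1$ for the smallest $\ell'$ that succeeds. By Lemmas~\ref{lem:lb} and~\ref{lem:ub}, the \bandw of $G$ then lies between $\ell'$ and $2\ell'-1$, yielding a $2$-approximation. Every subroutine invoked (Propositions~\ref{prop:slow2approx}, \ref{prop:generalalgo}, \ref{prop:dp}, and~\ref{prop:leftright}) uses only polynomial space, so the final algorithm does too. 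The trivial cases $k=\lceil n/\ell\rceil \le 2$ are dispatched in polynomial time, and the many-buckets range $k \ge 27$ is handled directly by Corollary~\ref{cor:smallbandwidth}, already bounded by $\cO(1.9737^n)$.

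For the intermediate range $3 \le k \le 26$, I would use a uniform divide-and-conquer template whose parameters depend on $k$: fix a central bucket $j \approx k/2$, enumerate every candidate vertex subset $X$ of the prescribed size to assign to $j$ (namely $\CCC[j]=\ell$ in general, or jointly $\CCC[j]+\CCC[j+1]=2\ell$ when $k=6$, or $\CCC[k]$ when $k=3$), and for each $X$ invoke Proposition~\ref{prop:leftright} on $G[\smallcomp_v(G \setminus X) \cup X]$ in time $\cO^*(3^{\sqrt n}) = 2^{\cO(\sqrt n)}$ to enumerate every residual capacity vector the small components could induce. Since every component in $\largecomp(G \setminus X)$ has more than $\sqrt n$ vertices, there are at most $\sqrt n$ of them, and their left/right placement across $j$ can be brute-forced in $2^{\sqrt n}$ time. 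This splits the instance into two independent subproblems of roughly half the size, which are recursed upon with the same template until they contain no three consecutive empty buckets, at which point Proposition~\ref{prop:dp} settles them in polynomial time (since $k$ is a constant the DP table has polynomial size). For $k \in \{8,10,12\}$ I would use left-packed capacity vectors and for the remaining $k$ balanced capacity vectors, which minimizes the largest residual subproblem.

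The main technical obstacle is the running-time bookkeeping summarized in Table~\ref{tab:runtimes}. For each $k \in \{3,\ldots,26\}$ the cost factors as a product of binomial coefficients of the form $\binom{n/2^i}{\ell}$ coming from the successive middle-bucket enumerations at each recursion depth, together with $2^{\cO(\sqrt n)}$ overhead absorbing the small- and large-component work. By Stirling's approximation each such product equals $2^{H(\alpha_1,\alpha_2,\ldots)\,n + o(n)}$ for rational constants determined by the extremal $\ell \in I_k = [n/k,n/(k-1)]$. One must numerically maximize these expressions over $k$ and over $\ell \in I_k$ and verify that the supremum is attained at $k \in \{7,8\}$ with $\ell \approx n/7$, giving $\binom{n}{n/7}\binom{3n/7}{n/7} \cdot 2^{o(n)} = \cO(1.9797^n)$, with every other row in the table strictly below this. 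Finally, if $G$ is disconnected the algorithm runs on each connected component $G_i$ independently and returns $2(\max_i \ell_i) - 1$, since the bandwidth of a disconnected graph equals the maximum bandwidth over its components and the running time remains dominated by the largest component.
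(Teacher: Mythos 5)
Your proposal follows essentially the same route as the paper: testing bucket sizes $\ell$ with Lemmas~\ref{lem:lb} and~\ref{lem:ub} for the approximation guarantee, splitting at a middle bucket with Proposition~\ref{prop:leftright} handling small components and a $2^{\sqrt n}$ enumeration of large ones, recursing until Proposition~\ref{prop:dp} applies, using left-packed vectors for $k\in\{8,10,12\}$, and confirming via the case analysis of Table~\ref{tab:runtimes} that $k=7,8$ with $\ell\approx n/7$ dominate at $\cO(1.9797^n)$. This matches the paper's argument, including the treatment of disconnected graphs, so there is nothing substantive to add.
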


\section{Conclusion}
\label{sec:con}
%By using divide-and-conquer, enumeration of small subsets, and dynamic programming on bucket decompositions of a graph, we have presented a $2$-approximation algorithm for \BANDW with a worst-case running time  of $\cO(1.9797^n) = \cO(3^{0.6217 n})$. The algorithm uses only polynomial space.

For finding exact solutions, it is known that many problems (by subexponential time preserving reductions) do not admit subexponential time algorithms under the Exponential Time Hypothesis~\cite{ImpagliazzoP01}. The Exponential Time Hypothesis postulates that there is a constant $c>0$ such that 3-\textsc{Sat} cannot be solved in time $\cO(2^{c n})$, where $n$ is the number of variables of the input formula.  We conjecture that the \BANDW problem has no subexponential time $2$-approximation algorithm, unless the Exponential Time Hypothesis fails.%An interesting question would be to show that the \BANDW problem doesn't have a subexponential time 2-approximation algorithm assuming the Exponential Time Hypothesis.

%
%The connection of Exponential Time Hypothesis with approximation algorithms interesting, mainly because negative results motivate the study of exponential time approximation algorithms.

%\begin{openquestion}
%Show that \BANDW has a subexponential time 2-approximation algorithm, or that it cannot have one under the Exponential Time Hypothesis.
%\end{openquestion}

%Further research directions concern higher approximation ratios. Recall that the algorithm of Cygan and Pilipczuk~\cite{CyganP09icalp} achieves an approximation ratio of $4r-1$ in time $\cO^*(2^{n/r})$ for every integer $r \ge 1$. Based on the $\cO(1.9797^n)$ algorithm for $2$-approximating the \bandw of a graph, can you design an $r$-approximation algorithm which is faster than the algorithm of~\cite{CyganP09icalp} for every integer $r > 2$?

\paragraph{Acknowledgment} We thank the reviewers for very helpful comments, in particular to avoid overcounting in the proof of Proposition \ref{prop:dp}.
Martin F\"urer acknowledges partial support from the National Science Foundation (CCF-0728921 and CCF-0964655).
Serge Gaspers acknowledges partial support from the Norwegian Research Council and the French National Research Agency (GRAAL project ANR-06-BLAN-0148).

\bibliographystyle{plain}
\begin{small}

\end{small}
\end{document}